\documentclass[11pt]{article}

\usepackage[margin=1in]{geometry}
\usepackage[T1]{fontenc}
\usepackage{lmodern}
\usepackage{microtype}
\usepackage{amsmath,amssymb,amsthm,mathtools}
\usepackage[hidelinks,hypertexnames=false]{hyperref}
\hypersetup{
pdftitle={Optimal Deterministic Fully Sparse Matrix Multiplication},
pdfauthor={Omar Graia}
}
\usepackage{booktabs}
\usepackage{tabularx}
\usepackage{array}
\usepackage{float}
\usepackage{graphicx}
\usepackage[ruled]{algorithm2e}
\usepackage{ifpdf}
\newcolumntype{S}{>{\raggedright\arraybackslash}X}
\ifpdf
  \newcolumntype{T}{>{\raggedright\arraybackslash}p{0.3\textwidth}}
\else
  \newcolumntype{T}{>{\raggedright\arraybackslash}X}
\fi
\newtheorem{theorem}{Theorem}[section]
\newtheorem{lemma}[theorem]{Lemma}
\newtheorem{proposition}[theorem]{Proposition}
\newtheorem{corollary}[theorem]{Corollary}
\theoremstyle{definition}
\newtheorem{definition}[theorem]{Definition}
\theoremstyle{remark}

\newtheorem{claim}{Claim}

\DeclareMathOperator{\nnz}{nnz}
\DeclareMathOperator{\supp}{supp}
\newcommand{\din}{\delta_{\mathrm{in}}}
\newcommand{\dout}{\delta_{\mathrm{out}}}
\newcommand{\eps}{\varepsilon}
\newcommand{\ceil}[1]{\left\lceil #1\right\rceil}

\newcommand{\polylog}{\operatorname{polylog}}
\newcommand{\SafeRec}{\operatorname{SafeRec}}

\title{Optimal Deterministic Fully Sparse Matrix Multiplication}
\author{Omar Graia}
\date{August 19, 2026}

\begin{document}
\maketitle

\begin{abstract}
We give the first deterministic algorithm for fully sparse matrix multiplication that attains the optimal running-time exponent. This result matches the best previously known randomized algorithm running-time exponent. Given compatible matrices $A$ and $B$ over an arbitrary associative ring with identity, with $\operatorname{nnz}(A),\operatorname{nnz}(B)=O(n^{\delta_{\mathrm{in}}})$ and $\operatorname{nnz}(AB)=O(n^{\delta_{\mathrm{out}}})$, our algorithm finds the support of $AB$ and computes the product exactly in
$$O\!\left(n^{\beta_R(\delta_{\mathrm{in}},\min\{\delta_{\mathrm{out}},2\delta_{\mathrm{in}}\})+\varepsilon}\right)$$
operations, where $\beta_R(\delta_{\mathrm{in}},\delta)$ denotes the maximum of $\delta_{\mathrm{in}}$ and $\omega_{\delta_{\mathrm{in}},R}(a,1,b)$ over all $a,b\in[0,1]$ satisfying $a+b=\delta$.
For dense inputs over a commutative ring, this bound simplifies to $O(n^{\omega_R((\delta_{\mathrm{out}}-1)_+,1,1)+\varepsilon})$. With the current rectangular matrix multiplication bounds, this is nearly quadratic, namely $O(n^{2+\varepsilon})$, for every $\delta_\mathrm{out}\le1.321334$, improving the previous deterministic range of $\delta_{\mathrm{out}}\le 0.642668$. To prove this result, we develop a general deterministic recovery technique that finds and fixes sparse parts of an unknown matrix while keeping temporary errors in denser parts under control.
\end{abstract}

\section{Introduction}
Matrix multiplication is a central algorithmic problem in computer science. Its dense complexity is bounded by the matrix multiplication exponent $\omega$, whose current upper bound was recently shown by Google DeepMind to be $\omega<2.371177$~\cite{DupontEtAl}, improving the previous bound $\omega<2.371339$ of Alman et al.~\cite{AlmanDuanEtAl}. A natural question asks whether matrix products can be computed faster when the input matrices, the output matrix, or both, are sparse i.e., having a large number of entries being $0$. This question has been heavily studied from algebraic, combinatorial, database, graph-algorithmic, communication, and numerical perspectives~\cite{Gustavson,YusterZwick,AmossenPagh,IwenSpencer,PaghCompressed,JacobStockel,VanGuchtEtAl,AbboudEtAl,GaoSurvey}.

In fully sparse matrix multiplication, the input is two matrices $A$ and $B$ whose dimensions are at most $n$ that are represented by canonical sparse lists. For sparsity exponents $\din,\dout\in[0,2]$, the $(\din,\dout)$-FSMM problem satisfies $\nnz(A),\nnz(B)=O(n^{\din})$ and $\nnz(AB)=O(n^{\dout})$. The constants hidden in these sparsity bounds are fixed promise constants and may be used by the algorithm. The goal of this problem is to output the canonical sparse list of $AB$.

Note the support of $AB$ is not supplied to the algorithm which is the main difficulty. Even when we know that the output is sparse, it is not immediately clear as to which rows or columns contain the output entries, and cancellations over a general ring can make the support of the product different from what the input supports alone would suggest. However, random sampling and verification have been shown to effectively reveal the unknown support efficiently which is why the previous strongest bounds were randomized algorithms \cite{AbboudEtAl,BennettEtAl}. Earlier deterministic algorithms either worked over a more restricted domain or were inefficient, paying a larger exponent \cite{Kutzkov,Kunnemann,BennettEtAl}.

This paper gives an optimal deterministic algorithm over arbitrary rings. The zero ring is trivial, while the optimality statement for the running-time exponent concerns nontrivial rings. The algorithm combines two levels of deterministic sparse recovery with one within columns and one across columns. We update a residual matrix which is the difference between the true product and the algorithm’s current approximation.
The inner recovery matrix sketches entries within each residual column. The resulting column sketches are treated as elements of a direct-product ring, and an outer recovery matrix identifies the nonzero sketches. A geometric sequence of thresholds eventually makes every residual column sparse enough for exact inner recovery. A bounded total decoder controls the size of speculative updates on columns that are still too dense. When the unknown matrix is $C=AB$, each required two-sided sketch is evaluated by one sparse rectangular product through $H(AB)G^{\mathsf T}=(HA)(BG^{\mathsf T})$.

\subsection{Main Results}
For $a,b\ge0$, let $\omega_{\din,R}(a,1,b)$ be the deterministic exponent for multiplying an $n^a\times n$ matrix by an $n\times n^b$ matrix over $R$ when each factor contains at most $O(n^{\din})$ nonzeros. We give a more formal definition in Section~\ref{sec:preliminaries}. Define 
$$\beta_R(\din,\delta):=\max\!\left\{\din,\max_{\substack{a,b\in[0,1]\\a+b=\delta}}\omega_{\din,R}(a,1,b)\right\}.$$
The output sparsity constraint is bounded by $\nnz(AB)\le\nnz(A)\nnz(B).$ Hence, when $\delta_\text{out}>2\delta_\text{in}$, $\nnz(AB)=O(n^{\dout})$ gives no extra information than the constraint with exponent $2\delta_\text{in}$. This is why the theorem below evaluates the function $\beta_R$ at $\min\{\dout, 2\din\}$.

\begin{theorem}[Deterministic FSMM upper bound]
\label{thm:main}
Let $R$ be an associative ring with identity, let $\din,\dout\in[0,2]$, and let $\eps>0$. Then $(\din,\dout)$-FSMM over $R$ can be solved deterministically and exactly in
$$O\!\left(n^{\beta_R(\din,\min\{\dout,2\din\})+\eps}\right)$$
ring and word operations.
\end{theorem}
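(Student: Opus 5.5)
We plan to follow the two-level recovery outline from the introduction. Set $\delta:=\min\{\dout,2\din\}$ and $C:=AB$, a $p\times q$ matrix with $p,q\le n$ and $\nnz(C)=O(n^{\delta})$. The algorithm keeps an explicit sparse approximation $\tilde C$, initially $0$, and the residual $E:=C-\tilde C$. Every sketch of the residual has the form $H E G^{\mathsf T}=(HA)(BG^{\mathsf T})-H\tilde C G^{\mathsf T}$. Here $H$ is an $n^{a}\times p$ matrix and $G$ is an $n^{b}\times q$ matrix, both with small entries taken from a fixed deterministic family, and $a+b\le\delta$. The first term is one sparse rectangular product of shape $n^a\times n\times n^b$. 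Provided $HA$ and $BG^{\mathsf T}$ still have $\tilde O(n^{\din})$ nonzeros, this product costs $n^{\omega_{\din,R}(a,1,b)+o(1)}\le n^{\beta_R(\din,\delta)+o(1)}$. We will ensure this by taking $H,G$ with $n^{o(1)}$ nonzeros per column. The second term costs $\tilde O(\nnz(\tilde C))$ plus $n^{a+b}\le n^{\delta}\le n^{\beta}$. Reading the input costs $n^{\din}$. So the whole budget reduces to running $n^{o(1)}$ such sketches and a decoder whose cost is near-linear in the sketch size.

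For the inner, within-column recovery matrix $G$, we use a deterministic $k$-sparse recovery matrix over an arbitrary ring, with only $\tilde O(k)$ rows. Because $R$ may be noncommutative and have zero divisors, we cannot use Vandermonde or Reed--Solomon syndromes. Instead we take a $0/1$ lossless-expander-type (disjunct/unbalanced expander) design with $n^{o(1)}$ ones per column. Its decoding uses only comparisons, additions and subtractions, iterating peeling: an entry whose bucket signature is isolated is identified and then subtracted. Such designs must be built deterministically in $n^{o(1)}$ time per column, for example by recursive composition of small explicit expanders, and this is where the $\eps$ loss enters. Then $EG^{\mathsf T}$ is a column-wise sketch, and each column sketch lies in the direct-product ring $R^{\tilde O(k)}$. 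The outer matrix $H$ applies the same construction across columns, treating each column sketch as a single element of that product ring. This recovers the columns whose sketches are nonzero, provided there are at most $k'$ of them.

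We schedule geometric thresholds $k_j=2^{j}$. At level $j$ we pick $a,b$ with $n^{a}\approx \nnz(C)/k_j$ outer rows and $n^{b}\approx k_j$ inner rows, so that $a+b\approx\delta$. The decoder returns candidate corrections for every column it isolates. For columns with at most $k_j$ residual nonzeros the correction is exact. For denser columns it may be wrong, and here we enforce the \emph{bounded total decoder}: the decoder outputs at most $O(k_j)$ entries per column and $O(\nnz(C))$ entries in total. The speculative error we introduce is therefore at most a constant factor times the true nonzero count. An invariant then ensures that after $O(\log n)$ levels $\nnz(E)=O(\nnz(C))$ throughout, and that every column has become exactly recovered. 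A final verification pass with the largest threshold confirms $E=0$. This pass ensures the output equals $AB$ and returns its support.

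We expect the main obstacle to be this potential argument. Each level must fix every column of size at most $k_j$, which we argue via isolation in the outer expander. It must also avoid inflating the residual in dense columns beyond what later levels can absorb. We would first try to show that a column receives a wrong update only when its residual exceeds $k_j$, and that such an update adds at most $O(k_j)$ entries. Summing over the at most $\nnz(C)/k_j$ such columns then bounds the new error by $O(\nnz(C))$ per level. Because the levels are geometric, a charging argument that charges each level's damage to columns at the next level should keep the total at $O(\nnz(C)\log n)$. The $\log n$ factor is absorbed into $n^{\eps}$.
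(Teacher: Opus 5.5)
Your architecture matches the paper's: two-sided sketches $HEG^{\mathsf T}=(HA)(BG^{\mathsf T})-H\tilde CG^{\mathsf T}$, outer recovery over a direct-product ring, geometric thresholds, and a bounded total decoder. The invariant that makes it work is missing, though, and the one you sketch cannot be established from the decoder's guarantees.

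You give the inner decoder capacity $k_j$ and guarantee exactness only for columns whose \emph{residual} has at most $k_j$ nonzeros. Speculative updates accumulate at the same geometric rate as the thresholds. Before level $j$, a column may already hold $\sum_{j'<j}k_{j'}=k_j-1$ wrong entries. Take a column with two true nonzeros. The only things you know about the decoder are that it is exact within capacity and outputs at most $k_{j'}$ entries otherwise. These permit the column to receive $k_{j'}$ new wrong entries, off its true support, at every level $j'$. Then before level $j$ its residual has $k_j+1>k_j$ nonzeros, so it is never guaranteed to be corrected until the threshold reaches the column dimension.

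With $\nnz(C)/2$ such columns, the number of incorrect columns stays $\Theta(\nnz(C))$. By level $2$ this exceeds the outer capacity $\approx\nnz(C)/k_j$ you allot, and from then on the outer decoder is not exact either. A bound on the total speculative mass does not help. It only controls how many columns have residual above $k_{j-1}$, and the resulting bound on the number $x_j$ of incorrect columns, $x_j\le x_{j-1}+\nnz(C)/k_{j-1}$, does not decrease. So the charging argument you propose cannot work with these parameters.

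The missing idea is to enlarge the inner capacity by the support already stored in the approximation, so that exactness depends on the \emph{true} support of $C_{*j}$. The paper uses $q_i=\min\{r,t_i+Q_{i-1}\}$ with $Q_i=Q_{i-1}+q_i$. Each bounded inner call adds at most $q_i$ entries, so every column of $D$ has at most $Q_{i-1}$ entries before round $i$. Induction gives $Q_i\le(i+1)2^i$, hence $q_i=O(t_i\log n)$, and the balance $p_iq_i=O(K\log n)$ with $p_i=\min\{c,\lceil K/t_{i-1}\rceil\}$ survives. Now every column with $\nnz(C_{*j})\le t_i$ has residual support at most $t_i+Q_{i-1}\le q_i$ and is fixed in round $i$, regardless of earlier garbage. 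Therefore, before round $i$, every incorrect column has more than $t_{i-1}$ true nonzeros, there are at most $K/t_{i-1}$ of them, and $D=C$ after the last round with no verification pass.

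Smaller issues:
\begin{itemize}
\item Cap both sketch dimensions at the matrix dimensions (identity fallback), so that $a,b\in[0,1]$ as the definition of $\beta_R$ requires.
\item Restrict to active rows, columns and middle indices, so that building recovery matrices over universes of size $n$ does not exceed $n^{\beta_R}$ when $\din<1$.
\item Your $\widetilde O(k)$-row peeling scheme would need its own justification over arbitrary rings, since you must identify the index of an isolated entry from $0/1$ sums alone. The $k^{1+\alpha}\polylog N$-row recovery of Bennett et al., made total as in the paper, already suffices because $\alpha$ is absorbed into $\eps$.
\end{itemize}
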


In Section~\ref{sec:optimality}, we adapt the rectangular padding argument underlying Bennett, Gajulapalli, Golovnev, and Warton~\cite[Proposition~4.3]{BennettEtAl} to our parameter range and combine it with Theorem~\ref{thm:main} to show that this exponent is optimal.

For dense inputs, the maximization making up $\beta_R$ simplifies to get the following result.

\begin{corollary}
\label{cor:dense}
Let $R$ be a commutative ring in a setting where the standard bilinear rectangular exponents are defined, let $\dout\in[0,2]$, and let $\eps>0$. Dense-input output-sparse matrix multiplication has a deterministic exact algorithm with running time
$$
O\!\left(n^{\omega_R((\dout-1)_+,1,1)+\eps}\right).
$$
\end{corollary}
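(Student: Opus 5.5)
We derive Corollary~\ref{cor:dense} from Theorem~\ref{thm:main} with $\din=2$, so that $\min\{\dout,2\din\}=\dout$. The running time is then governed by
\[
\beta_R(2,\dout)=\max\Bigl\{2,\ \max_{a+b=\dout,\ a,b\in[0,1]}\omega_{2,R}(a,1,b)\Bigr\}.
\]
When $\din=2$ the sparsity promise is vacuous for every $n^a\times n$ and $n\times n^b$ factor with $a,b\le 1$. Hence $\omega_{2,R}(a,1,b)$ should coincide with the standard dense exponent $\omega_R(a,1,b)$, and we will check this against the formal definition in Section~\ref{sec:preliminaries}. In the commutative bilinear setting we may also apply the usual symmetries of rectangular exponents to write $\omega_R(a,1,b)=\omega_R(b,1,a)$ and to rearrange dimensions.

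The main step is to show that the maximum of $\omega_R(a,1,b)$ over the segment $a+b=\dout$, $a,b\in[0,1]$, equals $\omega_R((\dout-1)_+,1,1)$, and that this value is at least $2$. We split into two cases.

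\textbf{Case $\dout\ge1$.} The segment is nonempty and contains the endpoint $(a,b)=(\dout-1,1)$. For the lower bound, the maximum is at least the value at this endpoint, namely $\omega_R(\dout-1,1,1)$. For the upper bound, we use convexity of $(a,b)\mapsto\omega_R(a,1,b)$. This follows from the standard tensor-product or interpolation argument for bilinear rectangular exponents, which is exactly why the corollary assumes the commutative bilinear setting. A convex function on a segment attains its maximum at an endpoint. The two endpoints are $(\dout-1,1)$ and $(1,\dout-1)$, and they give equal values by the transpose symmetry $\omega_R(a,1,b)=\omega_R(b,1,a)$. Finally, $\omega_R(\dout-1,1,1)\ge2$ because the output alone has $n^2$ entries.

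\textbf{Case $\dout<1$.} Here the segment is $\{(a,\dout-a)\}$ with every coordinate at most $1$. By monotonicity of the exponent in each dimension, each value on the segment satisfies $\omega_R(a,1,b)\le\omega_R(0,1,1)$. Moreover $\omega_R(0,1,1)=2$, since an $n\times n$ by $n\times1$ product (or its transpose variant) needs only $O(n^2)$ operations. Together with the outer $\max\{2,\cdot\}$, this gives $\beta_R=2=\omega_R(0,1,1)=\omega_R((\dout-1)_+,1,1)$.

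We expect the main obstacle to be justifying the convexity step in the first case rigorously. It needs the precise notion of rectangular exponent the paper fixes, together with the fact that bilinear algorithms compose under tensor products, so that $\omega$ is convex along lines. If convexity turns out to be awkward to establish, the first fallback is a direct decomposition: split an $n^a\times n$ by $n\times n^b$ product with $a+b=\dout$ into blocks that reduce to $n^{\dout-1}\times n\times n$ products. Two facts make this work. First, $a\le 1$ and $b\le1$ force $a,b\ge\dout-1$. Second, the product $n^{a}\times n\times n^{b}$ can be cut into $n^{a+b-\dout}$ copies of shape $n^{\dout-1}\times n\times n$ after padding, and here $a+b-\dout=0$. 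Padding to the larger shape $n^{a}\times n\times n$ and invoking monotonicity then gives $\omega_R(a,1,b)\le\omega_R(\dout-1,1,1)$ up to the symmetric choice of which side to pad.
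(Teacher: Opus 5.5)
Your reduction to bounding $\beta_R(2,\dout)$ and the split at $\dout=1$ match the paper. Your main argument for $\dout\ge1$ is correct but takes a different route.

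The paper bounds each point of the segment directly. It writes $\omega_R(a,1,b)=\omega_R(1,a,b)$ and applies the cited interpolation inequality $\omega_R(a,b,c)\le\omega_R(a,1,b+c-1)$ of Bennett et al.\ to get $\omega_R(1,a,b)\le\omega_R(1,1,\dout-1)$. You instead maximize a convex function along the segment and compare the two endpoints by symmetry.

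The convexity you flag as the main obstacle is immediate from homogeneity and subadditivity in Proposition~\ref{prop:dense-properties}: $\omega_R(\lambda x+(1-\lambda)y)\le\omega_R(\lambda x)+\omega_R((1-\lambda)y)=\lambda\omega_R(x)+(1-\lambda)\omega_R(y)$. So your route is self-contained given Lotti--Romani; it is essentially how the interpolation inequality itself is proved. The paper's route is shorter because it simply cites that inequality.

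You also make explicit that $\omega_R(\dout-1,1,1)\ge2$, which the paper uses silently. The reason, however, is the term $b+c=2$ in the lower bound of Proposition~\ref{prop:dense-properties}, coming from the $n\times n$ input factor. It is not the output size: an $n^{\dout-1}\times n$ by $n\times n$ product has only $n^{\dout}$ output entries.

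Two steps are wrong as written.

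\emph{The case $\dout<1$.} Monotonicity does not give $\omega_R(a,1,b)\le\omega_R(0,1,1)$. Passing from $(a,1,b)$ to $(0,1,1)$ decreases the first argument whenever $a>0$. Monotonicity alone only yields bounds such as $\omega_R(a,1,b)\le\omega_R(\dout,1,\dout)$, which is not known to be at most $2$ when $\dout$ is close to $1$. The correct justification, used by the paper, is the trivial bound $\omega_R(a,1,b)\le a+1+b=1+\dout\le2$.

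\emph{Your fallback for $\dout\ge1$.} It fails and should be dropped. Padding $n^a\times n\times n^b$ to $n^a\times n\times n$ gives $\omega_R(a,1,1)$, which is at least $\omega_R(\dout-1,1,1)$ because $a\ge\dout-1$. So the inequality points the wrong way. Cutting the $n^a$ side into blocks of $n^{\dout-1}$ rows instead costs an extra factor $n^{a-\dout+1}=n^{1-b}$. The bound $\omega_R(a,1,b)\le\omega_R(\dout-1,1,1)$ genuinely needs subadditivity, not padding or blocking, and that is exactly what your convexity argument supplies.
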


The dual rectangular exponent is $\omega_R^\perp:=\sup\{\alpha\ge0:\omega_R(\alpha,1,1)=2\}$. The current bounds $\omega<2.371177$ and $\omega_R^\perp\ge0.321334$ follow from Dupont et al. and Vassilevska Williams, Xu, Xu, and Zhou~\cite{DupontEtAl,VXXZ}.

\begin{corollary}
\label{cor:quadratic}
In every commutative-ring setting in which $\omega_R^\perp\ge0.321334$, dense-input output-sparse matrix multiplication can be solved deterministically in $O(n^{2+\eps})$ operations for every $\dout\le1.321334$.
\end{corollary}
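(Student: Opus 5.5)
The plan is to derive the statement from Corollary~\ref{cor:dense}, specialized to $\dout\le 1.321334$, together with the defining property of the dual exponent $\omega_R^\perp$.

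\emph{Step 1.} Apply Corollary~\ref{cor:dense}. For every $\dout\in[0,2]$ and every $\eps>0$, dense-input output-sparse multiplication runs in
$$O\!\left(n^{\omega_R((\dout-1)_+,1,1)+\eps}\right).$$
So it suffices to show that $\omega_R(\alpha,1,1)=2$ for $\alpha:=(\dout-1)_+$ whenever $\dout\le 1.321334$. Since $\dout\le 1.321334$, we have $0\le\alpha\le 0.321334\le\omega_R^\perp$.

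\emph{Step 2.} Show that $\omega_R(\cdot,1,1)$ equals $2$ on all of $[0,\omega_R^\perp)$, and handle the endpoint. This is the only step needing care, because $\omega_R^\perp$ is a supremum and need not be attained.

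For the interior, use monotonicity and the trivial lower bound. Fix $\alpha'<\omega_R^\perp$. By the definition of supremum there is some $\alpha''\ge\alpha'$ with $\omega_R(\alpha'',1,1)=2$. Padding an $n^{\alpha'}\times n$ matrix with zero rows up to $n^{\alpha''}\times n$ gives $\omega_R(\alpha',1,1)\le\omega_R(\alpha'',1,1)=2$. The input size gives $\omega_R(\alpha',1,1)\ge 2$. Hence $\omega_R(\alpha',1,1)=2$.

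For the endpoint $\alpha=\omega_R^\perp$ (only relevant if $\alpha=0.321334=\omega_R^\perp$ exactly), use continuity of $\alpha\mapsto\omega_R(\alpha,1,1)$. This is standard: the map is convex, or at least Lipschitz, by splitting an $n^{\alpha+\eta}\times n$ product into $n^\eta$ blocks, which gives $\omega_R(\alpha+\eta,1,1)\le\omega_R(\alpha,1,1)+\eta$. Taking $\alpha'\uparrow\omega_R^\perp$ then yields $\omega_R(\omega_R^\perp,1,1)\le 2+\eta$ for every $\eta>0$. As an alternative, one can simply absorb $\eta$ into the $\eps$ of the running time: choose $\alpha'=\alpha-\eps/2$ and pay an extra $n^{\eps/2}$ factor.

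\emph{Step 3.} Combine. Every admissible $\dout$ gives exponent $2+\eps$, possibly after replacing $\eps$ by $\eps/2$ in Corollary~\ref{cor:dense}. This yields the claimed $O(n^{2+\eps})$ deterministic bound. The main obstacle is the supremum/endpoint subtlety in Step 2. Everything else is a direct substitution into Corollary~\ref{cor:dense}.
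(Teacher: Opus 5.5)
Your proposal is correct and follows essentially the same route as the paper: reduce to Corollary~\ref{cor:dense}, then use the definition of $\omega_R^\perp$ together with monotonicity and continuity of $\alpha\mapsto\omega_R(\alpha,1,1)$ to get $\omega_R((\dout-1)_+,1,1)=2$. Your explicit handling of the supremum endpoint, via padding and the block-splitting Lipschitz bound, spells out what the paper compresses into a single sentence citing Proposition~\ref{prop:dense-properties}.
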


The lower-bound reduction was known previously. The new contribution of this paper is the deterministic arbitrary-ring upper bound that attains it, hence giving an optimal algorithm over every nontrivial ring.

\subsection{Previous Work}
Sparse matrix multiplication has a long history. Gustavson's classical algorithm organizes a sparse product through sparse outer products~\cite{Gustavson}. Yuster and Zwick gave faster input-sparse algorithms by separating high-degree and low-degree coordinates and applying rectangular matrix multiplication to the remaining dense part~\cite{YusterZwick}. Related rectangular techniques appear in work of Kaplan, Sharir, and Verbin~\cite{KaplanSharirVerbin}. Boolean matrix multiplication with sparse outputs and related database problems were studied by Lingas, Amossen and Pagh, Van Gucht et al., and Deep, Hu, and Koutris~\cite{Lingas,AmossenPagh,VanGuchtEtAl,DeepHuKoutris}. Pagh introduced compressed matrix multiplication, and later work gave algorithms whose running time depends on the output sparsity, as well as external-memory bounds~\cite{PaghCompressed,JacobStockel,PaghStockel}.

Compressed sensing can be used for exact recovery when each output column is sparse. Iwen and Spencer observed that one can compute $HAB=(HA)B$ and then recover the columns independently~\cite{IwenSpencer}. This builds on literature on sparse recovery and expander-based measurement matrices~\cite{CandesRombergTao,IndykExplicit,BerindeEtAl,GilbertIndyk,GuruswamiUmansVadhan}. A bound on the total output sparsity is weaker than a separate sparsity bound on every column. Bennett et al. handled this with a deterministic two-pass compressed-sensing algorithm. They first recover the sparse columns and then apply a transposed pass to the residual~\cite{BennettEtAl}.

The main earlier deterministic algorithms for sparse-output matrix multiplication are due to Kutzkov and K\"unnemann. Kutzkov used deterministic group-testing ideas over the reals and obtained $O(n^2+n m_{\mathrm{out}}^2)$ time~\cite{Kutzkov}. K\"unnemann connected output-sparse multiplication with deterministic correction of matrix products over the integers and obtained $\widetilde O(n^2\sqrt{m_{\mathrm{out}}}+m_{\mathrm{out}}^2)$ time~\cite{Kunnemann}. Here $m_{\mathrm{out}}:=\nnz(AB)$. Matrix product verification and correction were also studied by Freivalds, Kimbrel and Sinha, Gasieniec et al., Roche, and Bennett et al.~\cite{Freivalds,KimbrelSinha,GasieniecEtAl,Roche,BennettMMV}. In particular, Bennett et al. studied matrix multiplication verification under a sparsity promise on the error matrix $AB-C$~\cite{BennettMMV}, a closely related sparse-residual setting.

\begin{table}[H]
\centering
\caption{Comparison of fully sparse matrix multiplication algorithms in the nonvacuous range $\dout\le2\din$. When $\dout>2\din$, the same bounds hold after replacing $\dout$ by $2\din$. The support of the product is not supplied to any of the algorithms.}
\label{tab:fully-sparse-comparison}
\small
\setlength{\tabcolsep}{4pt}
\renewcommand{\arraystretch}{1.2}
\begin{tabularx}{\textwidth}{c l c T S}
\toprule
Year & Result & Randomness & Running time & Scope / comment \\
\midrule
2013 & Kutzkov & Deterministic & $O\!\left(n^2+n\,m_{\mathrm{out}}^2\right)$ & Real-valued; ignores input sparsity \\
2018 & K\"unnemann & Deterministic & $\widetilde O\!\left(n^2\sqrt{m_{\mathrm{out}}}+m_{\mathrm{out}}^2\right)$ & Integer-valued; ignores input sparsity \\
2024 & Abboud et al. & Randomized & $O\!\left(n^{\beta_R(\din,\dout)+\eps}\right)$ & General rings; Boolean deterministic \\
2025 & Bennett et al. & Deterministic & $O\!\left(n^{\max\left\{\substack{1+\dout/2,\\[-2pt]\omega_{\din,R}(\dout/2,1,1)}\right\}+\eps}\right)$ & Suboptimal in some regimes \\
2025 & Bennett et al. & Randomized & $O\!\left(n^{\beta_R(\din,\dout)+\eps}\right)$ & Optimal randomized curve \\
\midrule
\textbf{2026} & \textbf{This work} & \textbf{Deterministic} & $\boldsymbol{O\!\left(n^{\beta_R(\din,\dout)+\eps}\right)}$ & \textbf{Optimal in all regimes} \\
\bottomrule
\end{tabularx}
\end{table}

The recent study of fully sparse matrix multiplication was initiated by Abboud, Bringmann, Fischer, and K\"unnemann. They introduced an algorithm that reduces sparse multiplication to smaller rectangular products. They were able to achieve the optimal curve over arbitrary rings but used randomization in their algorithm. They also give deterministic results for Boolean and nonnegative products and connect fully sparse multiplication with all-edge triangle problems~\cite{AbboudEtAl}. Bennett et al. later gave an arbitrary-ring deterministic algorithm with exponent $\max\{1+\dout/2,\omega_{\din,R}(\dout/2,1,1)\}$ and a different randomized algorithm that achieves the optimal curve~\cite{BennettEtAl}.

For dense inputs, their deterministic compressed-sensing bound is nearly quadratic only for $\dout\le0.642668$, while the randomized optimal curve remains nearly quadratic through $\dout\le1.321334$ under the current rectangular bounds. Figure~\ref{fig:osmm-curves} compares dense-input output-sparse exponents for different algorithms. The optimal curve was previously attained only by randomized arbitrary-ring algorithms. The deterministic bound in this paper follows the same blue curve.

\begin{figure}[H]
\centering
\includegraphics[width=0.8\textwidth]{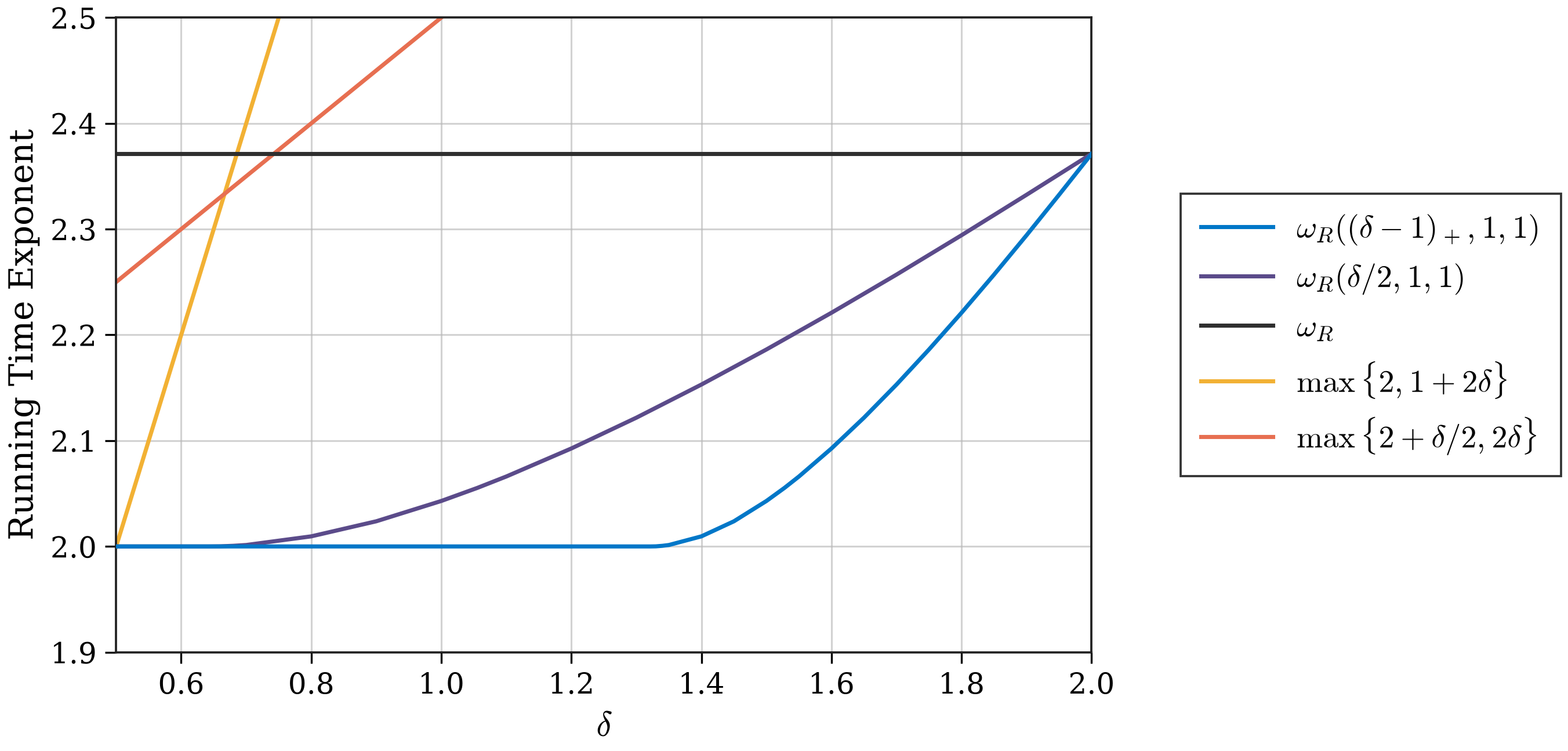}
\caption{Running-time exponents for dense-input output-sparse matrix multiplication, where $\delta=\dout$. The blue curve plots the current upper bound on the optimal exponent $\omega_R((\dout-1)_+,1,1)$, attained randomly in prior work and deterministically here. The purple curve is the deterministic bound of Bennett et al. The orange and red curves are the bounds of Kutzkov and K\"unnemann. The horizontal line is the current upper bound on the square matrix multiplication exponent.}
\label{fig:osmm-curves}
\end{figure}

Their randomized algorithm also uses a geometric sequence of column-sparsity thresholds, but relies on randomized column verification. Our outer deterministic recovery step replaces that verification and lets the inner and outer dimensions vary along the full rectangular curve. Bringmann, Fischer, and Nakos recently studied a robust version in which the goal is to recover a sparse approximation with error controlled by the output tail~\cite{BringmannFischerNakos}.

\subsection{Technical Overview}
Here we give an overview of our algorithm. Let $C=AB$, let $K$ be an upper bound on $\nnz(C)$, and let $D$ be the current approximation for $C$ that the algorithm uses. The algorithm uses rounds with scales $t=1, 2,4,\dots$ and at scale $t$, its goal is to make every column of $C$ with at most $t$ nonzero entries corrected in $D$ exactly.

The difference $E:=C-D$ is the residual matrix. Its $j$th column is the correction that must be added to the current column of $D$, since
$$D_{*j}+E_{*j}=D_{*j}+(C-D)_{*j}=C_{*j}.$$
Thus the algorithm recovers residual columns and adds them to $D$. If $E_{*j}$ is recovered exactly, then the update $D_{*j}\leftarrow D_{*j}+E_{*j}$ makes column $j$ equal to $C_{*j}$.

Now suppose that every column of $C$ with at most $t/2$ nonzeros was corrected in the previous round. Any column that is still incorrect must then have more than $t/2$ nonzeros in $C$. Since $C$ has at most $K$ nonzeros in total, only $O(K/t)$ columns can remain incorrect. This allows the outer sparse recovery step to find the remaining incorrect columns.

We now explain how a column is recovered. We choose an inner measurement matrix $H$ that only recovers vectors with sparsity of $t$ plus the maximum support already stored in a column $D$. We need to account for extra support stored in the columns since a denser column of $D$ may contain extra entries from earlier rounds of recovery targeting sparser columns. For every column $j$, define the measurement $\xi_j:=H(C-D)_{*j}=HE_{*j}.$ If column $j$ of $C$ has at most $t$ nonzeros, then the residual column $E_{*j}=C_{*j}-D_{*j}$ has support at most the support of $C_{*j}$ plus the support already stored in $D_{*j}$. This is within the inner recovery bound, so $\xi_j$ determines $E_{*j}$ exactly. Adding the recovered residual to $D_{*j}$ then makes the column equal to $C_{*j}$.

The algorithm still needs to find which measurements $\xi_j$ are nonzero. Each $\xi_j$ is a vector in $R^m$, where $m$ is the number of rows of $H$ and we view each such vector as one element of the direct-product ring $R^m$. The vector $\xi=(\xi_1,\ldots,\xi_c)\in(R^m)^c$ has nonzero coordinates only when the corresponding column of $D$ is incorrect and does not match $C$. Since we know only $O(K/t)$ columns can remain incorrect at this round, $\xi$ has only $O(K/t)$ nonzero coordinates. Hence we can use another measurement matrix $G$ over $R^m$ to recover all of its nonzero coordinates.

The two levels of measurements can be written as
$$(C-D)\xrightarrow{\text{inner measurements }H}\text{column measurements in }R^m\xrightarrow{\text{outer measurements }G}H(C-D)G^{\mathsf T}.$$
The first step takes a measurement of every residual column and the second step finds the nonzero measurements among those columns. Formally, after identifying $(R^m)^g$ with $m\times g$ matrices over $R$, the outer measurements of $\xi$ are exactly $H(C-D)G^{\mathsf T}$.

For every nonzero measurement returned by the outer decoder, the algorithm runs the inner decoder and adds its output to the corresponding column of $D$. If the column has at most $t$ nonzeros then the inner decoder is able to recover the exact residual and hence correct the column. If the column has more than $t$ nonzeros however, then exact recovery is not guaranteed in that round for the column. Even then, any temporary update adds only a bounded number of entries to $D$, and the next round includes those stored entries in its recovery bound. Thus, columns with more nonzero entries will still be fully corrected in later rounds. 

At each round, the scale doubles. At scale $t$ the inner recovery sparsity is $\widetilde O(t)$ and the outer recovery sparsity is $O(K/t)$ so their product is $\widetilde O(K)$. This keeps the total decoding and update work within $K^{1+\alpha}\polylog n$ for every fixed $\alpha>0$. At a large enough scale, every column would be within the inner recovery bound and at that point all columns of $D$ would equal the corresponding columns of $C$. 

The algorithm is able to find the measurements of $E=C-D$ without actually computing $C$ by noting that $H(C-D)G^{\mathsf T}=(HA)(BG^{\mathsf T})-HDG^{\mathsf T}.$ Since $H$ and $G$ are binary with polylogarithmic column sparsity, forming $HA$ and $BG^{\mathsf T}$ increases the number of nonzeros by only polylogarithmic factors. As $t$ doubles, the resulting rectangular products cover the dimensions used in $\beta_R(\din,\min\{\dout,2\din\})$.

The rest of the paper proves this overview. Section~\ref{sec:preliminaries} gives the recovery and rectangular-multiplication tools that we use throughout. Section~\ref{sec:nested} proves the general nested recovery theorem. Section~\ref{sec:main-proof} proves Theorem~\ref{thm:main}. Section~\ref{sec:optimality} combines the upper bound with the known rectangular padding reduction to show that this run time is optimal.

\section{Preliminaries}
\label{sec:preliminaries}

\subsection{Notation and computational model.}
For a positive integer $N$, define $[N]=\{1,\ldots,N\}$. For a real number $x$, define $(x)_+:=\max\{x,0\}$. For a vector or matrix $X$ let $\nnz(X)$ be the number of nonzero entries in $X$ and let $\supp(X)$ be their index set. Note that $X_{*j}$ refers to the $j$th column of $X$ and we use $\widetilde O(\cdot)$ to hide factors polynomial in $\log n$.

Throughout this paper, $R$ is an arbitrary associative ring with identity and we do not assume commutativity unless stated otherwise. We count each ring operation, zero test, and operation on an $O(\log n)$-bit index as one unit of cost. Sparse vectors are stored as sorted lists of their nonzero entries, together with each entry's position. Sparse matrices are stored as lists of their sparse columns. Further, merging sparse lists of total length $M$ takes $M\polylog n$ ring and word operations. Computing a matrix product means outputting the sparse list of all its nonzero entries, hence the running time includes reading the input and writing the output.

For the run time analysis, we use the following sparse rectangular matrix-multiplication exponent, defined in the same computational model as FSMM.

\begin{definition}[Sparse rectangular exponent]
\label{def:sparse-exponent}
Let $a,b,c,\delta\ge0$. The exponent $\omega_{\delta,R}(a,b,c)$ is the infimum over all exponents $\gamma$ such that, for every $\eta>0$, there is a deterministic algorithm that multiplies an $\ceil{n^a}\times\ceil{n^b}$ matrix by an $\ceil{n^b}\times\ceil{n^c}$ matrix over $R$ in $O(n^{\gamma+\eta})$ ring and word operations, assuming each input matrix has at most $O(n^\delta)$ nonzero entries and is given as a canonical sparse list. The algorithm must output the requested product as a canonical sparse list.
\end{definition}

In particular, for every $\eta>0$ there is such an algorithm running in $O(n^{\omega_{\delta,R}(a,b,c)+\eta})$ ring and word operations.

The sparse rectangular exponent uses the same computational model as FSMM, including the cost of writing the sparse output~\cite{AbboudEtAl,BennettEtAl}. If $O(n^\delta)$ exceeds the number of entries in an input matrix, then the sparsity bound gives no extra restriction. We use $\omega_{\delta,R}(a,1,b)$ when the middle dimension is $n$. For dense inputs, $\omega_R(a,b,c)$ denotes the corresponding exponent with no sparsity condition. Because the dense exponent already accounts for all output entries, explicitly writing and zero-testing the dense product does not change the exponent. Hence $\omega_{2,R}(a,1,b)=\omega_R(a,1,b)$ for all $a,b\in[0,1]$.

Notice that if $R$ is nontrivial and $\delta\ge\max\{a,c\}$, then we have that
$$\omega_{\delta,R}(a,1,c)\ge a+c.$$
This is because if we place $1_R$ in the same inner coordinate of every row of the left factor and every column of the right factor then the left and right factors will have $\Theta(n^a)$ and $\Theta(n^c)$ nonzero entries respectively, and all $\Theta(n^{a+c})$ entries of the product are equal to $1_R$. Thus, any algorithm following the model of Definition~\ref{def:sparse-exponent} must spend $\Omega(n^{a+c})$ time writing the output.

Now we prove two straightforward consequences of the definition that will be used in the run time analysis. 

\begin{lemma}
\label{lem:padding}
Let $a,b,\delta\ge0$.
\begin{enumerate}
\item If $a'\ge a$ and $b'\ge b$, then
$$
\omega_{\delta,R}(a,1,b)\le\omega_{\delta,R}(a',1,b')
\le\omega_{\delta,R}(a,1,b)+(a'-a)+(b'-b).
$$
\item Replacing the input sparsity bound $O(n^\delta)$ by $O(n^\delta\polylog n)$ does not change the exponent.
\end{enumerate}
\end{lemma}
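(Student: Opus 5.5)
For part 1, I would argue both inequalities by direct reductions inside the computational model of Definition~\ref{def:sparse-exponent}.

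\emph{Lower inequality.} Take an $\ceil{n^a}\times n$ by $n\times\ceil{n^b}$ instance with $O(n^\delta)$ nonzeros per factor. Pad it with zero rows and zero columns to sizes $\ceil{n^{a'}}\times n$ and $n\times\ceil{n^{b'}}$. The padded factors have the same sparse lists, so the sparsity bound still holds and no padding work is needed. Run the algorithm for the larger shape. Its output list is the product, up to discarding entries in the padded rows and columns, and there are none because those entries are zero. This costs $O(n^{\omega_{\delta,R}(a',1,b')+\eta})$ for every $\eta>0$, which gives the first inequality.

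\emph{Upper inequality.} Split the $\ceil{n^{a'}}$ rows of the left factor into $O(n^{a'-a})$ blocks of at most $\ceil{n^a}$ rows. Split the columns of the right factor into $O(n^{b'-b})$ blocks of at most $\ceil{n^b}$ columns. Each block inherits the sparsity bound $O(n^\delta)$, since it is a submatrix. Computing the block lists by scanning the inputs costs $O(n^\delta\polylog n)$ per factor. Each of the $O(n^{(a'-a)+(b'-b)})$ block products costs $O(n^{\omega_{\delta,R}(a,1,b)+\eta})$. The output blocks occupy disjoint index ranges, so concatenating them into a canonical list costs time linear in the output up to $\polylog n$, and the output is already paid for by the block calls. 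The input-reading overhead is dominated because every block call reads its inputs, and there is at least one call. I expect the one subtle point here to be bookkeeping: an empty block still costs $O(1)$ per call. This is absorbed since the number of calls is $O(n^{(a'-a)+(b'-b)})$ and each call has cost at least $1$.

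For part 2, suppose each factor has $O(n^\delta\log^c n)$ nonzeros. I would split the inner dimension into $L=O(\log^c n)$ groups of columns of $A$ (and the corresponding rows of $B$), chosen greedily so that each group carries $O(n^\delta)$ nonzeros in $A$ and in $B$. Any group made of a single heavy coordinate would violate this, so instead I would cut by nonzero counts while allowing each group to be an arbitrary subset of inner indices. Each group yields a product $A_iB_i$ of the original shape with $O(n^\delta)$-sparse factors, and $AB=\sum_i A_iB_i$. Merging the $L$ sparse output lists costs their total length times $\polylog n$. That length is at most $L$ times the cost of each call, so the total is $O(n^{\omega+\eta}\polylog n)$, which is within $O(n^{\omega+2\eta})$.

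The main obstacle is a single inner column of $A$ whose nonzeros together exceed $n^\delta$. I would handle it by splitting $A$'s column into its $O(n^\delta)$ pieces and duplicating the matching row of $B$. This enlarges the inner dimension only by a factor of $\polylog n$, and the inner dimension can then be cut back into $\polylog n$ pieces of size $n$. Alternatively, I would simply split $A$ by rows into $\polylog n$ parts and $B$ by columns into $\polylog n$ parts, each $O(n^\delta)$-sparse, which is cleaner. Every pair of parts gives an instance of no larger shape, and there are $\polylog^2 n$ pairs. The outputs are disjoint blocks, so they only need concatenation. I will use this row and column split as the final argument.
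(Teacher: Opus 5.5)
Your part 1 is the same zero-padding and block-splitting argument the paper uses, and it is fine.

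The gap is in part 2, in the row/column split you settle on. A row cannot be divided between parts, so that split needs every single row of $A$ and every single column of $B$ to carry $O(n^\delta)$ nonzeros. This fails whenever $\delta<1$. A row of an $\lceil n^a\rceil\times n$ matrix has $n$ positions, so under the weakened promise one row may hold $n^\delta\log n$ nonzeros (take $\delta=1/2$, say). Whichever part receives that row then violates the $O(n^\delta)$ bound, and the algorithm from Definition~\ref{def:sparse-exponent} cannot be invoked on it. This is the same heavy-index obstruction you found for the inner-dimension split, just moved to the outer dimensions. For $\delta\ge1$ your argument does go through, since a row has at most $n\le n^\delta$ nonzeros. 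The case $\delta<1$ matters, because the lemma is applied with $\delta=\din$ and $\din<1$ is allowed.

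The repair is to split entries rather than whole lines. Write $A=\sum_iA_i$ and $B=\sum_jB_j$, where each $A_i$ and $B_j$ keeps the original shape and the nonzero entries are partitioned into $O(\polylog n)$ groups of at most $\lceil n^\delta\rceil$ each. Then $AB=\sum_{i,j}A_iB_j$ by distributivity in any ring, and each of the $\polylog^2 n$ products is a legitimate instance. The outputs now overlap, so you must merge and add them, deleting cancellations. This costs $\polylog n$ times their total length, which is bounded by the total cost of the calls.

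The paper avoids splitting altogether. For $\delta>0$ it pads to $N=\lceil n^{1+\theta}\rceil$, so that $n^\delta\polylog n\le N^\delta$. It then runs an $O(N^{\gamma+\zeta/4})$-time algorithm for any $\gamma>\omega_{\delta,R}(a,1,b)$, which is $O(n^{\gamma+\zeta})$ once $(1+\theta)(\gamma+\zeta/4)\le\gamma+\zeta$. The case $\delta=0$ is handled separately by brute force. The rescaling does not care how the nonzeros are distributed and needs no merging. The entrywise split stays at the same $n$ and covers $\delta=0$ without a special case.
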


\begin{proof}
For $(1)$, the first inequality follows by zero padding. Any $n^a\times n$ and $n\times n^b$ instance can be enlarged to dimensions $n^{a'}\times n$ and $n\times n^{b'}$ by adding zero rows and columns. Restricting the padded product to the original rows and columns recovers the original product, so $\omega_{\delta,R}(a,1,b)\le\omega_{\delta,R}(a',1,b')$.

Now for the second inequality, we split the first matrix into $O(n^{a'-a})$ row blocks with each having at most $n^a$ rows, and then split the second matrix into $O(n^{b'-b})$ column blocks with each having at most $n^b$ columns. From there, every row block and column block still has at most $O(n^\delta)$ nonzero entries so we can thus multiply each pair using the algorithm for dimensions $n^a\times n$ and $n\times n^b$. There are $O(n^{(a'-a)+(b'-b)})$ pairs, and putting the resulting blocks together gives
$$\omega_{\delta,R}(a',1,b')\le\omega_{\delta,R}(a,1,b)+(a'-a)+(b'-b).$$

For $(2)$, one inequality is immediate since an $O(n^\delta)$ sparsity bound is stronger than an $O(n^\delta\polylog n)$ bound. For the converse, fix $\gamma>\omega_{\delta,R}(a,1,b)$ and $\zeta>0$. Suppose first that $\delta>0$. Choose $\theta>0$ so that $(1+\theta)(\gamma+\zeta/4)\le\gamma+\zeta$. For all sufficiently large $n$, $\polylog n\le n^{\delta\theta}$. Taking $N=\ceil{n^{1+\theta}}$ and padding to dimensions $N^a\times N$ and $N\times N^b$, each factor has $O(N^\delta)$ nonzeros. Definition~\ref{def:sparse-exponent} therefore gives a running time
$$O\!\left(N^{\gamma+\zeta/4}\right)=O\!\left(n^{\gamma+\zeta}\right).$$
The original product is obtained by restricting the padded product to its original rows and columns.

If $\delta=0$, let $s_A$ and $s_B$ be the numbers of nonzero entries in the two inputs. Then $s_A,s_B=\polylog n$. Considering at most $s_As_B=\polylog^2 n$ pairs, multiply the stored entries directly. Both sorting the resulting entries by position and combining equal positions also take polylogarithmic time each, hence the whole computation takes $\polylog n=O(n^\zeta)$ for every $\zeta>0$, thus the exponent remains $0$.

Since $\gamma>\omega_{\delta,R}(a,1,b)$ and $\zeta>0$ were arbitrary, the extra polylogarithmic factor in the sparsity bound does not change the exponent.
\end{proof}

Lemma~\ref{lem:padding}(1) also implies that $(a,b)\mapsto\omega_{\delta,R}(a,1,b)$ is continuous, since changing the two outer exponents by $\Delta a$ and $\Delta b$ changes the sparse rectangular exponent by at most $|\Delta a|+|\Delta b|$. Hence the maximum in the definition of $\beta_R(\din,\delta)$ is attained.

For dense rectangular exponents, we use the properties developed by Lotti and Romani~\cite{LottiRomani}.

\begin{proposition}[Lotti and Romani~\cite{LottiRomani}]
\label{prop:dense-properties}
Let $R$ be a commutative ring in a setting where the usual bilinear rectangular exponents are defined. The exponent $\omega_R(a,b,c)$ is invariant under permutations of its arguments, is homogeneous, continuous, is nondecreasing in every argument, and satisfies
$$\omega_R(a_1+a_2,b_1+b_2,c_1+c_2)\le\omega_R(a_1,b_1,c_1)+\omega_R(a_2,b_2,c_2).$$
Moreover, $\max\{a+b,a+c,b+c\}\le\omega_R(a,b,c)\le a+b+c$.
\end{proposition}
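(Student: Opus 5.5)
This proposition is a known result of Lotti and Romani~\cite{LottiRomani}, so the plan is to recall the standard arguments from bilinear complexity rather than derive anything new. Work with the tensor $\langle N^a,N^b,N^c\rangle$ and define $\omega_R(a,b,c)$ as the infimum of $\tau$ such that its bilinear (or border) rank over $R$ is $O(N^{\tau+\eta})$ for every $\eta>0$. The key tool is that rank is submultiplicative under tensor (Kronecker) products and that recursive use of a bilinear algorithm turns rank bounds into running-time bounds. Commutativity is what lets us cycle and transpose the tensor.

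\textbf{Invariance and homogeneity.} For permutation invariance, first use the cyclic symmetry of the matrix multiplication tensor: a bilinear algorithm for $\langle p,q,r\rangle$ yields one for $\langle q,r,p\rangle$ with the same rank, via the trilinear form $\sum x_{ij}y_{jk}z_{ki}$. Then use transposition, $(XY)^{\mathsf T}=Y^{\mathsf T}X^{\mathsf T}$, which needs commutativity, to obtain the reversal. Together these generate all permutations. For homogeneity $\omega_R(\lambda a,\lambda b,\lambda c)=\lambda\,\omega_R(a,b,c)$, substitute $n\mapsto n^{\lambda}$ in the definition.

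\textbf{Monotonicity, continuity and the additive inequality.} Monotonicity follows from zero padding, exactly as in Lemma~\ref{lem:padding}(1). The inequality
\[
\omega_R(a_1+a_2,b_1+b_2,c_1+c_2)\le\omega_R(a_1,b_1,c_1)+\omega_R(a_2,b_2,c_2)
\]
comes from the tensor product identity $\langle N^{a_1},N^{b_1},N^{c_1}\rangle\otimes\langle N^{a_2},N^{b_2},N^{c_2}\rangle\cong\langle N^{a_1+a_2},\dots\rangle$ together with submultiplicativity of rank. Block decomposition also gives the Lipschitz bound $\omega_R(a',b,c)\le\omega_R(a,b,c)+(a'-a)$, and continuity follows from this bound combined with monotonicity.

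\textbf{Bounds.} The upper bound $\omega_R(a,b,c)\le a+b+c$ is the naive algorithm. For the lower bound, note that the output has $N^{a+c}$ entries and the bilinear rank of $\langle p,q,r\rangle$ is at least $\max\{pq,qr,pr\}$. This follows by flattening: each flattening has full rank, which can be seen over the residue field of any maximal ideal, so it holds over any nontrivial commutative ring.

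\textbf{Main obstacle.} The step most likely to cause trouble is making the rank-to-exponent translation rigorous over a general commutative ring, where border rank and field-specific arguments are not directly available. I would restrict to the standard "setting where the usual bilinear rectangular exponents are defined," as the statement does, and defer those details to~\cite{LottiRomani}.
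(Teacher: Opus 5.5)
Your sketch is correct and consistent with the paper. The paper gives no argument of its own here and simply imports these properties from Lotti and Romani; what you supply is the standard justification behind that citation. The pieces are right:
\begin{itemize}
\item permuting the three factors of $\langle p,q,r\rangle$ gives permutation invariance, with transposition being where commutativity enters;
\item $\langle p_1,q_1,r_1\rangle\otimes\langle p_2,q_2,r_2\rangle\cong\langle p_1p_2,q_1q_2,r_1r_2\rangle$, together with submultiplicativity of rank, gives the additive inequality;
\item block splitting gives the Lipschitz bound and hence continuity, which is the same argument the paper uses for the sparse exponent in Lemma~\ref{lem:padding}(1) and the remark after it;
\item flattening after reducing modulo a maximal ideal $\mathfrak m$ gives the lower bound, and you correctly note that this needs $R\neq 0$.
\end{itemize}

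The one thing worth keeping explicit is that you have changed the definition. The paper's dense $\omega_R$ is the running-time exponent in the unit-cost sparse-list model of Definition~\ref{def:sparse-exponent}, whereas you work with a rank exponent. In the running-time model, several properties are immediate: the bounds $\max\{a+b,a+c,b+c\}\le\omega_R(a,b,c)\le a+b+c$ follow from reading the input and writing the output, and monotonicity and the Lipschitz bound follow from padding and block splitting. Permutation invariance and the additive inequality, however, are not evident for arbitrary algorithms with branching and zero tests. They rely on bilinearity, so that an algorithm can be transposed and applied recursively to blocks with central coefficients.

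Identifying the two exponents is precisely what the hedge ``in a setting where the usual bilinear rectangular exponents are defined'' assumes. Your final paragraph correctly locates this as the point where both you and the paper defer to the literature.
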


We will also use the inequality $\omega_R(a,b,c)\le\omega_R(a,1,b+c-1)$ for $a,b,c\in[0,1]$ with $b+c\ge1$, as stated by Bennett et al.~\cite[Corollary~2.6]{BennettEtAl}.

\subsection{Deterministic sparse recovery.}
A recovery pair for vectors in $R^N$ with sparsity $k$ consists of a binary matrix $H\in\{0,1\}^{m\times N}$ and a deterministic decoder $\operatorname{Rec}_{H,k}$. Exact recovery means that $\operatorname{Rec}_{H,k}(Hx)=x$ for every $x\in R^N$ with $\nnz(x)\le k$.

Bennett et al.~\cite[Theorem~3.2]{BennettEtAl}, building on the compressed-sensing construction of Berinde et al.~\cite{BerindeEtAl} and the expander construction of Guruswami, Umans, and Vadhan~\cite{GuruswamiUmansVadhan}, give the following recovery theorem that we use over arbitrary rings.

\begin{theorem}
\label{thm:sparse-recovery}
Let $R$ be a ring and let $\alpha>0$ be a constant. For every universe size $N$ and sparsity $1\le k\le N$, there is a deterministic recovery pair $(H,\operatorname{Rec}_{H,k})$ with
$$\operatorname{rows}(H)=O\!\left(k^{1+\alpha}\polylog N\right).$$
The matrix $H$ is binary and $d$-column-sparse for $d=\polylog N$. A sparse representation of $H$ can be constructed in $\widetilde O(N)$ word operations. The decoder uses $O(k^{1+\alpha}\polylog N)$ operations over $R$, and multiplying $H$ by a sparse vector $x$ takes $\nnz(x)\polylog N$ additions over $R$.
\end{theorem}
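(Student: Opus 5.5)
This theorem is quoted from Bennett et al.~\cite[Theorem~3.2]{BennettEtAl}, so the plan is to reconstruct their argument. The approach combines an explicit lossless expander with an iterative, ring-agnostic decoder in the style of Berinde et al.~\cite{BerindeEtAl}.

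The first step is to take the explicit lossless condenser of Guruswami, Umans, and Vadhan~\cite{GuruswamiUmansVadhan}. It gives a bipartite graph with left side $[N]$, left degree $d=\polylog N$, and right side of size $m=O(k^{1+\alpha}\polylog N)$. Every left set $S$ with $|S|\le 2k$ has at least $(1-\epsilon)d|S|$ neighbours, for a small constant $\epsilon$. Let $H$ be its binary adjacency matrix. It is $d$-column-sparse by construction. Each column is computed by evaluating the condenser (a Parvaresh--Vardy style polynomial evaluation) in $\polylog N$ time, so all of $H$ is built in $\widetilde O(N)$ word operations. Computing $Hx$ for a sparse $x$ touches $d$ rows per nonzero entry, which gives $\nnz(x)\polylog N$ additions.

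For the decoder we cannot use linear programming or norms, since $R$ is an arbitrary ring. Instead I would use a combinatorial decoding rule that needs only zero tests and additions. For each row $r$ of the current residual measurement $y=H(x-\hat x)$, look at the value $y_r$. Then, for each candidate coordinate $i$, check whether some value $v\neq0$ occurs in a majority (more than $d/2$) of the rows $N(i)$. If so, set $\hat x_i\leftarrow\hat x_i+v$ and update $y$ on those $d$ rows.

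The key combinatorial fact comes from unique-neighbour expansion. For every nonzero residual $e$ with $|\supp(e)|\le 2k$, a constant fraction of $\supp(e)$ consists of coordinates $i$ for which more than $(1-2\epsilon)d$ of the rows in $N(i)$ are unique neighbours of $i$. On those rows $y_r=e_i$ exactly, over any ring, because no cancellation with other support elements is possible.

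The step I expect to be the main obstacle is bounding the work. We must show that (a) every coordinate outside the true support that passes the majority test cannot cause the residual support to grow beyond $2k$, and (b) each round shrinks the residual support geometrically. Then $O(\log k)$ rounds suffice. For (a), I would argue via expansion: a wrong update at $i$ requires that more than $d/2$ of $i$'s neighbours are collision-free rows carrying the same value. This forces $i$ to share more than $d/2$ neighbours with the support, contradicting the $(1-\epsilon)$ expansion of $\supp(e)\cup\{i\}$ when $\epsilon<1/4$.

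To keep decoding at $O(k^{1+\alpha}\polylog N)$ rather than $O(N)$, I would enumerate candidates only through nonzero rows of $y$. There are at most $d\cdot 2k$ such rows. For each one, I would list the preimages using the condenser's list-decoding property, which \cite{GuruswamiUmansVadhan} provides efficiently with list size $k^{\alpha}\polylog N$. This accounts for the $k^{1+\alpha}$ factor. Summing over the $O(\log k)$ rounds gives the stated decoder cost.
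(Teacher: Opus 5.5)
\textbf{The gap: how the decoder finds candidates.} Your expander layer is fine. The GUV condenser gives a binary, $\polylog N$-column-sparse $H$ with $O(k^{1+\alpha}\polylog N)$ rows, computable column by column in $\widetilde O(N)$ time, and computing $Hx$ costs $\nnz(x)\polylog N$. The problem is the decoder's running time.

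Listing the preimages of each nonzero row cannot fit the budget. A right vertex of the GUV graph has on average $Nd/m\approx N/(d\,k^{1+\alpha})$ preimages, which is about $N/\polylog N$ when $k=1$. Going through the up to $2kd$ nonzero rows therefore typically touches on the order of $N/k^{\alpha}$ coordinates, far above $k^{1+\alpha}\polylog N$ once $N$ is large relative to $k$.

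GUV do not provide per-row lists of size $k^{\alpha}\polylog N$. Their list-decoding bound concerns left vertices having a large fraction of their neighbours inside a \emph{set} $T$ of right vertices. Computing that set goes through interpolating a polynomial vanishing on $T$, which costs $\mathrm{poly}(|T|)=\mathrm{poly}(kd)$, not $k^{1+\alpha}\polylog N$.

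The decoder the paper relies on avoids enumeration entirely. This is Bennett et al.'s \textsc{Reduce}/\textsc{Recovery}, as described in the proof of the total-decoder lemma. Each measurement block yields at most one candidate, read off from that block's own entries. So one iteration costs $m\polylog N$, and there are at most $\lceil\log_2(2k)\rceil$ iterations; the factor $k^{1+\alpha}$ comes from $m$, not from list sizes. The standard way to get this with a binary matrix is to replace each expander row $r$ by a block. The block consists of the bucket sum $\sum_{i\in N^{-1}(r)}x_i$ together with the $\lceil\log_2 N\rceil$ bit-test sums over $\{i\in N^{-1}(r):\mathrm{bit}_b(i)=1\}$. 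This keeps $H$ binary and $\polylog N$-column-sparse. A bucket whose residual content is a single coordinate then reveals that coordinate's index and value using only zero tests. This identification mechanism is the missing idea in your plan.

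\textbf{A second, fixable error in step (a).} Suppose one outside coordinate $i$ shares more than $d/2$ neighbours with $N(S)$, where $S=\supp(e)$. This only gives $|N(S\cup\{i\})|<d(|S|+1)-d/2$. That contradicts $(1-\eps)$-expansion only when $|S|+1<1/(2\eps)$, because expansion bounds the \emph{total} number of collisions by $\eps d(|S|+1)$, not the collisions of a single vertex.

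What does hold is a bound on the whole set $F$ of false positives, and this is also what your unproved geometric shrinkage in (b) needs. Apply expansion to $S\cup F'$ with $F'\subseteq F$ of size $\lfloor 2\eps|S|/(1-2\eps)\rfloor+1$; this gives a contradiction, so $|F|\le 2\eps|S|/(1-2\eps)$. Coordinates of $S$ that receive a wrong value have more than $d/2$ non-unique neighbours, so there are at most $4\eps|S|$ of them. At least half of $S$ has at least $(1-4\eps)d>d/2$ unique neighbours and is corrected exactly. Together these give shrinkage by a factor $1/2+O(\eps)$ per round, provided the expander covers sets somewhat larger than the residual support.

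Alternatively, with sequential updates, the potential $\nnz(y)$ strictly decreases. Since it starts at most $kd$, it prevents the residual support from ever reaching $2k$.
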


The construction of the binary recovery matrix depends only on $N,k$ and $\alpha$, not on a ring. This decoder works over an arbitrary ring and only uses the ring operations stated in the theorem. Hence, we can apply the theorem to rings that vary with $n$, including direct-product rings $R^m$.

If the identity matrix has fewer rows than the recovery matrix, we use it instead. In this case, define $\SafeRec_{I,k}(y):=y$ when $\nnz(y)\le k$, and $\SafeRec_{I,k}(y):=0$ otherwise. Thus, for a constant $C_\alpha$ depending only on $\alpha$, we may assume
$$\operatorname{rows}(H)\le\min\left\{N,C_\alpha k^{1+\alpha}\polylog N\right\}.$$
The identity case occurs only when $N\le C_\alpha k^{1+\alpha}\polylog N$, so reading its input and producing the sparse output takes time within the same bound.

The next lemma extends the decoder to arbitrary inputs, including vectors that may not equal $Hx$ for any $k$-sparse vector $x$.

\begin{lemma}
\label{lem:total-decoder}
For the recovery pair in Theorem~\ref{thm:sparse-recovery}, there is a deterministic algorithm $\SafeRec_{H,k}$ such that, for every input vector $y\in R^m$,
\begin{enumerate}
\item $\SafeRec_{H,k}(y)$ is a canonical sparse vector with at most $k$ nonzero entries;
\item if $x\in R^N$ is $k$-sparse, then $\SafeRec_{H,k}(Hx)=x$;
\item the algorithm uses $O(k^{1+\alpha}\polylog N)$ ring and word operations, whether or not $y=Hx$ for a $k$-sparse vector.
\end{enumerate}
\end{lemma}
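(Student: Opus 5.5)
The plan is to wrap the decoder $\operatorname{Rec}_{H,k}$ of Theorem~\ref{thm:sparse-recovery} with three safeguards: an operation budget, a sparsity check, and a re-encoding verification. Let $T(k,N)=C'k^{1+\alpha}\polylog N$ be the explicit bound on the number of operations that $\operatorname{Rec}_{H,k}$ uses on valid inputs.

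On an arbitrary input $y\in R^m$, we simulate $\operatorname{Rec}_{H,k}(y)$ step by step while counting operations. If the count exceeds $T(k,N)$, we abort and output $0$. If the decoder halts, we take its output $\hat x$ and check its form. We require that $\hat x$ is a list of at most $k$ entries with valid positions in $[N]$. We then sort it into canonical form, merging duplicate positions and dropping zeros after zero tests. If the list is too long, we output $0$; this is detected after reading $k+1$ entries. Next we compute $H\hat x$. By Theorem~\ref{thm:sparse-recovery} this costs $\nnz(\hat x)\polylog N\le k\polylog N$ additions and produces a sparse vector with at most $k\polylog N$ entries. We compare $H\hat x$ with $y$ and output $\hat x$ if they are equal, and $0$ otherwise.

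The three properties then follow.
\begin{enumerate}
\item The output is either $0$ or a canonical list with at most $k$ nonzeros.
\item If $y=Hx$ with $x$ $k$-sparse, the decoder halts within budget and returns $x$. Then $H\hat x=y$, so $x$ is returned.
\item The cost is bounded on every input. The simulation costs $O(T)$, since the counter adds only constant overhead per step. Canonicalization costs $O(k\polylog N)$, and re-encoding costs $k\polylog N$.
\end{enumerate}

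The main obstacle is the comparison step. The input $y$ has length $m=\operatorname{rows}(H)=O(k^{1+\alpha}\polylog N)$, so a full scan of $y$ is affordable and equality can be tested entrywise in $O(m)$ ring and word operations. This also covers the case where $y$ is given densely. If $y$ is given as a sparse list, we instead merge it with $H\hat x$ in time $(m+k\polylog N)\polylog N$.

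In the identity case, $\SafeRec_{I,k}$ is already total. It reads $y$, counts its nonzeros, and stops early after $k+1$ of them, all within $O(N)\le O(k^{1+\alpha}\polylog N)$ operations.

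One subtlety remains: the decoder might behave badly on invalid inputs, for example by writing malformed output. Running it in a budgeted simulation and validating every entry it writes handles this. Any malformed step is treated as an abort, and we output $0$.
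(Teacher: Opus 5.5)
Your argument is correct, but it takes a different route from the paper.

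\textbf{The paper's route.} The paper opens up the cited decoder (Algorithms~5 and~6 of Bennett et al.) and argues directly that it is already time-bounded on every input. There are at most $\lceil\log_2(2k)\rceil$ iterations. \textsc{Reduce} costs $m\polylog N$ on any input. \textsc{Recovery} aborts before re-encoding any proposal with more than $3k/2$ nonzeros. The paper then sums the retained proposals, canonicalizes, and returns $0$ if more than $k$ nonzeros survive.

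\textbf{Your route.} You treat $\operatorname{Rec}_{H,k}$ as a black box and clock it against its valid-input bound $T(k,N)$. This is the standard generic way to make any algorithm with a known worst-case bound on promised inputs run within that bound on all inputs. It makes you independent of the internals of the cited construction, so it would work verbatim for any other decoder. The price is that you need a concrete, computable bound $T$ on all steps of the decoder on valid inputs, ring and word alike, since your counter counts every step. Existence of the constant and polylog exponent suffices for the lemma, because the algorithm may hardcode them. The paper's route needs no step counter or explicit constants, but it is tied to the specific algorithm.

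\textbf{Remarks on your write-up.}
\begin{enumerate}
\item The decisive idea in your proof is the budget, not the comparison step you single out as the main obstacle. The re-encoding check $H\hat x=y$ is not needed for any of the three items, since the clock plus the support check already give them. It does harmlessly add the guarantee that any nonzero output is consistent with $y$.
\item You could canonicalize $\hat x$ before applying the $\le k$ length test; its raw length is at most $T$ anyway. Then item~2 would not depend on the decoder emitting its output already free of duplicates or explicit zeros.
\end{enumerate}
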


\begin{proof}
We use the concrete recovery algorithm in Algorithms~5 and 6 of Bennett et al.~\cite[Appendix~A.3]{BennettEtAl}. For an input $y\in R^m$, that algorithm performs at most $\ceil{\log_2(2k)}$ iterations. In each iteration it calls the deterministic \textsc{Reduce} procedure to obtain a proposed vector. 

The \textsc{Reduce} algorithm runs in $m\polylog N$ time for any input vector. This is because it scans the same measurement blocks each time and adds at most one candidate from each block. Thus there are just $O(m)$ candidates, and the proposed vector can be stored as a sparse list within the same time bound. This does not require the input to be of the form $Hx$ for a $k$-sparse vector $x$. The \textsc{Recovery} algorithm then counts the nonzero entries in the proposed vector before multiplying it by $H$. If there are more than $3k/2$, it returns zero immediately. Otherwise the proposed vector has at most $3k/2$ nonzeros, and since $H$ has polylogarithmic column sparsity, multiplying it by $H$ takes $k\polylog N$ operations. Hence every iteration has bounded cost even when $y$ does not come from a sparse vector, and the run time analysis in the proof of Bennett et al.'s Theorem~3.2 gives $O(k^{1+\alpha}\polylog N)$ ring operations in total.

Run the decoder while retaining the proposed sparse vectors. If the \textsc{Recovery} procedure terminates because a proposed vector has more than $3k/2$ nonzeros, return the zero vector. Otherwise, each proposed vector has at most $3k/2$ nonzeros. Since there are at most $\ceil{\log_2(2k)}$ iterations, the total size of all proposed sparse lists is $O(k\log k)$. Combine repeated coordinates and delete zero entries. If the resulting vector has more than $k$ nonzeros, return the zero vector. Otherwise return it. This takes $O(k\log k\polylog N)$ operations, which is within $O(k^{1+\alpha}\polylog N)$. This proves the first and third claims.

If $x$ is $k$-sparse and $y=Hx$, the original decoder returns $x$ exactly. Since $\nnz(x)\le k$, the final support check does not change it. Hence $\SafeRec_{H,k}(Hx)=x$.
\end{proof}
Exact recovery also implies injectivity. If $x$ is $k$-sparse and $Hx=0$, then
$x=\SafeRec_{H,k}(Hx)=\SafeRec_{H,k}(0)=0.$

\subsection{Direct-product rings and binary matrices.}
For $m\ge1$, let $S=R^m$. Addition and multiplication in $S$ are performed separately in each coordinate, and its identity is $1_S=(1_R,\ldots,1_R)$. Thus one ring operation or zero test in $S$ takes $O(m)$ operations or zero tests in $R$. We view a vector in $S^g$ as an $m\times g$ matrix over $R$, with one element of $S$ in each column.

If $G\in\{0,1\}^{g\times c}$ is a binary matrix and $T$ is a ring, write $G^{(T)}$ for the matrix obtained by interpreting each bit as $0_T$ or $1_T$. For every $x\in T$, we have $0_Tx=x0_T=0_T$ and $1_Tx=x1_T=x$, even when $T$ is noncommutative. Thus $G^{(R)}$ and $G^{(S)}$ have $1$-entries in the same positions, so one sparse list can represent both. Hence, we do not store a separate copy of $1_S$ for each nonzero entry of $G$. We omit the superscript when the ring is clear.

\section{Nested Deterministic Recovery}
\label{sec:nested}
We first study a more general recovery problem that does not depend on matrix multiplication. Let $C\in R^{r\times c}$ be unknown with $r,c\le n$ and $\nnz(C)\le K$ for an integer $K\ge0$. For binary recovery matrices $H$ and $G$ chosen by the algorithm, assume that the algorithm can compute $HC\bigl(G^{(R)}\bigr)^{\mathsf T}.$ In Section~\ref{sec:main-proof} we show how to compute these two-sided measurements when $C=AB.$

We begin by defining the parameters used by the recovery algorithm. First replace $K$ with  $\min\{K,rc\}$. If $K=0$ then we return the zero matrix. Otherwise set $t_{-1}:=0$ and $Q_{-1}:=0$, and for $i\ge0$ define $t_i:=2^i$,
$q_i:=\min\{r,t_i+Q_{i-1}\},$ and $ Q_i:=Q_{i-1}+q_i.$ The value $t_i$ is the true column sparsity that is corrected by the end of round $i$. The inner decoder capacity $q_i$ includes both the true support and the support already stored in a column. The value $Q_i$ bounds the stored support after round $i$. 

Also define $
p_0:=\min\{c,K\} $ and $
p_i:=\min\!\left\{c,\ceil{K/t_{i-1}}\right\}\text {for }i\ge1,$ and let $L:=\ceil{\log_2\min\{r,K\}}$. At round $i$ choose an inner recovery pair $(H_i,\operatorname{Rec}_{H_i,q_i})$ over $R$, where $H_i\in\{0,1\}^{m_i\times r}$. Let $S_i:=R^{m_i}$ and choose an outer recovery pair $(G_i,\operatorname{Rec}^{S_i}_{G_i,p_i})$ over $S_i$, where $G_i\in\{0,1\}^{g_i\times c}$. The same binary matrix $G_i$ is used in both places. In the two-sided measurement, its $0$-$1$ entries are taken in $R$, while the outer decoder uses the same entries in $S_i=R^{m_i}$.

The algorithm starts with  $D^{(0)}=0$. At round $i$ it forms $$
W_i:=H_i(C-D^{(i)})\bigl(G_i^{(R)}\bigr)^{\mathsf T},$$ applies the outer decoder $\SafeRec^{S_i}_{G_i,p_i}$ to recover the vector of measurements $\bigl(H_i(C-D^{(i)})_{*j}\bigr)_{j\in[c]}$, and applies $\SafeRec_{H_i,q_i}$ to each recovered nonzero measurement. The modified columns are then canonicalized.

\begin{algorithm}[H]
\caption{Nested deterministic recovery from two-sided measurements}
\label{alg:abstract-nested}
\DontPrintSemicolon
\SetKwInput{KwInput}{Input}
\SetKwInput{KwOutput}{Output}
\SetKw{KwRet}{return}
\KwInput{Dimensions $r,c$; an upper bound $K\ge\nnz(C)$; access to $HC(G^{(R)})^{\mathsf T}$.}
\KwOutput{The canonical sparse list of $C$.}
$K\leftarrow\min\{K,rc\}$\;
\If{$K=0$}{
\KwRet the empty list\;
}
$D\leftarrow0$; $Q\leftarrow0$; $T\leftarrow\min\{r,K\}$; $L\leftarrow\ceil{\log_2T}$\;
\For{$i\leftarrow0$ \KwTo $L$}{
$t\leftarrow2^i$; $q\leftarrow\min\{r,t+Q\}$\;
\eIf{$i=0$}{
$p\leftarrow\min\{c,K\}$\;
}{
$p\leftarrow\min\{c,\ceil{K/2^{i-1}}\}$\;
}
Construct an inner recovery pair $(H,\operatorname{Rec}_{H,q})$ over $R$\;
$S\leftarrow R^{\operatorname{rows}(H)}$\;
Construct an outer recovery pair $(G,\operatorname{Rec}^{S}_{G,p})$ over $S$\;
$W\leftarrow HC(G^{(R)})^{\mathsf T}-HD(G^{(R)})^{\mathsf T}$\;
$\xi\leftarrow\SafeRec^{S}_{G,p}(W)$, viewing the columns of $W$ as elements of $S$\;
\ForEach{$j\in\supp(\xi)$}{
$z\leftarrow\SafeRec_{H,q}(\xi_j)$\;
$D_{*j}\leftarrow D_{*j}+z$\;
}
Canonicalize the modified sparse columns of $D$\;
$Q\leftarrow Q+q$\;
}
\KwRet $D$\;
\end{algorithm}

\medskip
The first lemma explains how the two recovery layers work together.

\begin{lemma}
\label{lem:column-measurements}
Let $E\in R^{r\times c}$ have at most $p$ nonzero columns. Let $H\in\{0,1\}^{m\times r}$ and define $\xi_j:=HE_{*j}\in R^m$ for $j\in[c]$. View $\xi=(\xi_1,\ldots,\xi_c)$ as a vector in $S^c$ over $S:=R^m$. Then $\xi$ is $p$-sparse. For every binary $G\in\{0,1\}^{g\times c}$, the vector $G^{(S)}\xi\in S^g$, represented as an $m\times g$ matrix over $R$, is exactly
$$
HE\bigl(G^{(R)}\bigr)^{\mathsf T}.
$$
If $G$ belongs to an exact $p$-sparse recovery pair over $S$, the outer decoder therefore recovers every measurement $HE_{*j}$ exactly.
\end{lemma}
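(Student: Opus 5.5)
The plan is to prove the three assertions of Lemma~\ref{lem:column-measurements} in order, working entrywise. Only the two facts about $0_T$ and $1_T$ from the subsection on direct-product rings are needed, so no commutativity is used anywhere.

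\emph{Sparsity of $\xi$.} If column $j$ of $E$ is zero, then $\xi_j=HE_{*j}=0$ in $R^m$. This is the zero element $0_S$ of $S$. Hence $\supp(\xi)$ is contained in the set of nonzero columns of $E$, which has at most $p$ elements.

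\emph{Identity for $G^{(S)}\xi$.} Fix $\ell\in[g]$. By definition,
$$\bigl(G^{(S)}\xi\bigr)_\ell=\sum_{j\in[c]}G^{(S)}_{\ell j}\,\xi_j .$$
A term with $G_{\ell j}=0$ equals $0_S\xi_j=0_S$, and a term with $G_{\ell j}=1$ equals $1_S\xi_j=\xi_j$. So the $\ell$th coordinate is $\sum_{j:G_{\ell j}=1}\xi_j$, with the sum taken coordinatewise in $R^m$. Its $h$th component, for $h\in[m]$, is
$$\sum_{j:\,G_{\ell j}=1}\ \sum_{s\in[r]}H_{hs}E_{sj}.$$
On the other side, compute
$$\Bigl(HE\bigl(G^{(R)}\bigr)^{\mathsf T}\Bigr)_{h\ell}=\sum_{j}\sum_{s}H_{hs}E_{sj}\,G^{(R)}_{\ell j}.$$
Here we multiply on the right by $0_R$ or $1_R$, which again gives $0_R$ or $x$ without any commutativity. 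So this sum is also $\sum_{j:\,G_{\ell j}=1}\sum_s H_{hs}E_{sj}$. The only points to check are associativity of the matrix product over $R$ and the fact that finite sums can be reordered, since addition is commutative. With these, the two expressions agree entrywise under the stated identification of $S^g$ with $m\times g$ matrices.

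\emph{Recovery.} Since $\xi$ is $p$-sparse in $S^c$, exact recovery of the pair $(G,\operatorname{Rec}^S_{G,p})$ over $S$ gives $\operatorname{Rec}^S_{G,p}(G^{(S)}\xi)=\xi$. The same holds for $\SafeRec^S_{G,p}$ by Lemma~\ref{lem:total-decoder}. We may apply the recovery theorem over $S$ because the construction is ring-independent. Thus the decoder returns every coordinate $\xi_j=HE_{*j}$.

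The main obstacle is the noncommutative bookkeeping. The binary entries of $G$ act by left multiplication on $\xi$ in $S$, but by right multiplication in $HEG^{\mathsf T}$ over $R$. I expect this mismatch to be resolved entirely by the fact that $0_T$ and $1_T$ are two-sided zero and identity.
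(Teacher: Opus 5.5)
Your proposal is correct and follows essentially the same route as the paper: support containment gives $p$-sparsity, you expand $(G^{(S)}\xi)_\ell$ using that $0$ and $1$ act as two-sided zero and identity, and you conclude from exactness of the outer decoder on the $p$-sparse $\xi$. The only cosmetic difference is that you verify the identity one component $h\in[m]$ at a time, whereas the paper matches whole columns of $HE(G^{(R)})^{\mathsf T}$; the underlying computation is the same.
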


\begin{proof}
If the $j$th column of $E$ is zero, then $\xi_j=HE_{*j}=0$. Thus $\supp(\xi)$ is contained in the set of nonzero columns of $E$, and $\nnz(\xi)\le p$.

Fix $\ell\in[g]$. The $\ell$th coordinate of $G^{(S)}\xi$ is
$$\bigl(G^{(S)}\xi\bigr)_\ell=\sum_{j=1}^c \bigl(G_{\ell j}1_S\bigr)\xi_j.$$
Since each $G_{\ell j}$ is either $0$ or $1$, the factor $G_{\ell j}1_S$ either makes $\xi_j$ zero or leaves it unchanged. Moreover, multiplication by $G_{\ell j}1_R$ has the same effect on either side. Thus this coordinate is the vector $\sum_{j=1}^c G_{\ell j}HE_{*j}=\sum_{j=1}^c HE_{*j}G_{\ell j}\in R^m$. That vector is the $\ell$th column of $HE(G^{(R)})^{\mathsf T}$. So the displayed matrix represents $G^{(S)}\xi$. Since $\xi$ is $p$-sparse, the bounded outer decoder is exact on this input and returns it in canonical form.
\end{proof}

\begin{lemma}
\label{lem:invariant}
For every $i\in\{0,\ldots,L\}$, right before round $i$ the following statements hold.
\begin{enumerate}
\item Every column $C_{*j}$ with $\nnz(C_{*j})\le t_{i-1}$ equals $D^{(i)}_{*j}$.
\item Every column of $D^{(i)}$ has at most $Q_{i-1}$ nonzero entries.
\end{enumerate}
Immediately after round $i$, the same statements hold with $t_i$, $Q_i$, and $D^{(i+1)}$ in place of $t_{i-1}$, $Q_{i-1}$, and $D^{(i)}$. In particular, $D^{(L+1)}=C$.
\end{lemma}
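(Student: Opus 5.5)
We argue by induction on $i$. The base case $i=0$ is immediate: $D^{(0)}=0$, $t_{-1}=0$ and $Q_{-1}=0$. A column with $\nnz(C_{*j})\le 0$ is zero and so equals $D^{(0)}_{*j}$, and every column of $D^{(0)}$ is empty. For the inductive step, assume statements (1) and (2) hold right before round $i$, and set $E:=C-D^{(i)}$. We first bound the number of nonzero columns of $E$. By (1), any column with $E_{*j}\neq0$ has $\nnz(C_{*j})>t_{i-1}$. Since $\nnz(C)\le K$, there are fewer than $K/t_{i-1}$ such columns when $i\ge1$. When $i=0$, a nonzero column of $E$ is a nonzero column of $C$, so there are at most $K$ of them. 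There are also trivially at most $c$. Hence $E$ has at most $p_i$ nonzero columns.

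Lemma~\ref{lem:column-measurements} then applies with $p=p_i$. The matrix $W_i$ represents $G_i^{(S_i)}\xi$ for the $p_i$-sparse vector $\xi=(H_iE_{*j})_j$. By Lemma~\ref{lem:total-decoder}(2) over $S_i$, the outer decoder returns $\xi$ exactly. So $\supp(\xi)$ is exactly the set of $j$ with $H_iE_{*j}\ne0$, and the algorithm obtains every $\xi_j=H_iE_{*j}$ exactly.

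Next we check that no correct column is disturbed and that every column with $\nnz(C_{*j})\le t_i$ becomes correct. If $E_{*j}=0$, then $\xi_j=0$, so $j\notin\supp(\xi)$ and the column is untouched. Now suppose $\nnz(C_{*j})\le t_i$. Then
$$\nnz(E_{*j})\le\nnz(C_{*j})+\nnz(D^{(i)}_{*j})\le t_i+Q_{i-1}.$$
Also $\nnz(E_{*j})\le r$, so $\nnz(E_{*j})\le q_i$. If $E_{*j}=0$ the column is already correct. Otherwise, by injectivity of the exact recovery pair, $\xi_j\ne0$, so $j\in\supp(\xi)$. Lemma~\ref{lem:total-decoder}(2) then gives $\SafeRec_{H_i,q_i}(\xi_j)=E_{*j}$, and the update produces $D^{(i+1)}_{*j}=C_{*j}$. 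This proves (1) after round $i$. Columns with $\nnz(C_{*j})\le t_{i-1}$ fall under the untouched case, so they stay correct.

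For (2), every modified column receives a vector $z$ with $\nnz(z)\le q_i$ by Lemma~\ref{lem:total-decoder}(1). Hence its support is at most $Q_{i-1}+q_i=Q_i$, and unmodified columns keep their bound $Q_{i-1}\le Q_i$. Canonicalization only merges and deletes zeros, so it cannot increase support.

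Finally, $t_L=2^L\ge\min\{r,K\}$, and every column satisfies $\nnz(C_{*j})\le\min\{r,K\}$. So (1) after round $L$ gives $D^{(L+1)}=C$.

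The main obstacle is the direct-product ring step. It is delicate for two reasons:
\begin{itemize}
\item We need $W_i$ to equal the true two-sided measurement of $E$. This holds because the algorithm computes $H_iC(G_i^{(R)})^{\mathsf T}-H_iD^{(i)}(G_i^{(R)})^{\mathsf T}$, which is exactly $H_iE(G_i^{(R)})^{\mathsf T}$ by linearity.
\item We need the case $i=0$ handled separately in the column count. There the convention $p_0=\min\{c,K\}$ covers the absence of a $K/t_{-1}$ bound.
\end{itemize}
A further subtle point is a dense column ($\nnz(C_{*j})>t_i$) that lands in $\supp(\xi)$. It may receive a speculative $z$, and the argument relies only on the support bound of Lemma~\ref{lem:total-decoder}(1) for such columns, never on correctness.
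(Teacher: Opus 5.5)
Your proof is correct and follows the paper's proof. Both argue by induction and bound the number of incorrect columns by $p_i$. Both get exact outer recovery from Lemma~\ref{lem:column-measurements}, show $\nnz(E_{*j})\le q_i$ for every column with $\nnz(C_{*j})\le t_i$, and obtain the support bound $Q_{i-1}+q_i=Q_i$ from the total decoder. You also state explicitly something the paper leaves implicit: already-correct columns have $\xi_j=0$ and so are never modified.
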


\begin{proof}
For $i=0$, every column of $C$ with at most $t_{-1}=0$ nonzeros is zero and agrees with $D^{(0)}=0$. Every column of $D^{(0)}$ has support $Q_{-1}=0$.

Assume the two statements hold immediately before round $i$ and let $E^{(i)}:=C-D^{(i)}$. If $i=0$, every nonzero column of $E^{(0)}=C$ contains at least one nonzero entry, so there are at most $\min\{c,K\}=p_0$ such columns. If $i\ge1$, every incorrect column has more than $t_{i-1}$ true nonzeros by the first inductive hypothesis. Since $C$ has at most $K$ nonzeros, the number of incorrect columns is at most $K/t_{i-1}$ and so at most $p_i$. Therefore $E^{(i)}$ has at most $p_i$ nonzero columns.

Apply Lemma~\ref{lem:column-measurements} with $E=E^{(i)}$, $H=H_i$, and $G=G_i$. The bounded outer decoder recovers
$$
\xi^{(i)}=\bigl(H_iE^{(i)}_{*1},\ldots,H_iE^{(i)}_{*c}\bigr)
$$
exactly over $S_i=R^{m_i}$.

Fix a column $j$ with $\nnz(C_{*j})\le t_i$. The second inductive hypothesis gives
$$
\nnz(E^{(i)}_{*j})
\le\nnz(C_{*j})+\nnz(D^{(i)}_{*j})
\le t_i+Q_{i-1}.$$
Clearly since $E^{(i)}_{*j}\in R^r$, it has at most $r$ nonzero entries. Because $\nnz(E^{(i)}_{*j})\le t_i+Q_{i-1}$ we get
$\nnz(E^{(i)}_{*j})\le\min\{r,t_i+Q_{i-1}\}=q_i$. If the residual is zero then the column is already correct. Otherwise, exact recovery of $H_i$ implies $H_iE^{(i)}_{*j}\ne0$, so the outer output contains coordinate $j$. Lemma~\ref{lem:total-decoder} then gives
$$\SafeRec_{H_i,q_i}\bigl(H_iE^{(i)}_{*j}\bigr)=E^{(i)}_{*j}.$$
Adding the update makes column $j$ equal to $C_{*j}$.

For a column with more than $t_i$ true nonzeros, the inner input may have more than $q_i$ nonzeros. Lemma~\ref{lem:total-decoder} still guarantees an update with at most $q_i$ nonzeros. After canonicalization, every current column has support at most $Q_{i-1}+q_i=Q_i$. This proves the statement after round $i$.

Finally, $t_L\ge\min\{r,K\}$. Every column of $C$ has at most $r$ positions and at most $K$ nonzero entries, so every column has support at most $t_L$. The first statement after round $L$ gives $D^{(L+1)}=C$.
\end{proof}

\begin{theorem}
\label{thm:nested}
Let $C\in R^{r\times c}$ satisfy $r,c\le n$ and $\nnz(C)\le K$ for an integer $K\ge0$, and let $\alpha>0$. Suppose that the algorithm can compute $H_iC(G_i^{(R)})^{\mathsf T}$ for every pair $H_i,G_i$ needed by Algorithm~\ref{alg:abstract-nested}. Then Algorithm~\ref{alg:abstract-nested} recovers $C$ exactly and deterministically from $O(1+\log(1+\min\{r,K\}))$ two-sided measurements. In addition to the cost of computing those measurements, it uses
$$\widetilde O\!\left(r+c+K^{1+\alpha}\right)$$
ring and word operations.
\end{theorem}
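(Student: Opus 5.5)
Correctness is immediate from Lemma~\ref{lem:invariant}: after round $L$ we have $D^{(L+1)}=C$, and the algorithm returns $D$. The measurement count follows because each round uses exactly one two-sided measurement $H_iC(G_i^{(R)})^{\mathsf T}$, and there are $L+1=\ceil{\log_2\min\{r,K\}}+1=O(1+\log(1+\min\{r,K\}))$ rounds; the case $K=0$ uses none. The substance of the proof is therefore the running-time bound. I would bound the cost of a single round by $\widetilde O(r+c+K^{1+\alpha})$ and multiply by the $O(\log n)$ rounds, which is absorbed into $\widetilde O$. As preparation I would also record that $q_i\le\min\{r,2^{i+1}\}$, since $Q_{i-1}\le\sum_{i'<i}q_{i'}$ and a geometric bound gives $Q_{i-1}\le 2^{i}$.

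Inside a round I would charge the steps in order. (a) Constructing $H_i$ costs $\widetilde O(r)$ and constructing $G_i$ costs $\widetilde O(c)$, both by Theorem~\ref{thm:sparse-recovery}. (b) Computing $H_iD(G_i^{(R)})^{\mathsf T}$: apply $H_i$ column by column to $D$, which costs $\nnz(D)\polylog n$ by the column sparsity of $H_i$, and then apply $G_i$ to the at most $\nnz(D)\polylog n$ resulting entries, again at polylogarithmic cost. By Lemma~\ref{lem:invariant}, $\nnz(D)$ is at most the total size of all updates so far. (c) The outer decoder of Lemma~\ref{lem:total-decoder} over $S_i$ uses $O(p_i^{1+\alpha}\polylog n)$ operations in $S_i$, and each such operation costs $O(m_i)$ in $R$. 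This gives $O(m_ip_i^{1+\alpha}\polylog n)$ with $m_i\le C_\alpha q_i^{1+\alpha}\polylog n$. (d) The inner decodings cost at most $p_i\cdot O(q_i^{1+\alpha}\polylog n)$, and the updates plus canonicalization cost $p_iq_i\polylog n$.

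The key estimate, and the step I expect to be the main obstacle, is $p_iq_i=O(K)$. For $i\ge1$, $p_i\le\ceil{K/2^{i-1}}$ and $q_i\le\min\{r,2^{i+1}\}\le 2^{i+1}$. When $K/2^{i-1}\ge1$ this gives $p_iq_i\le 2(K/2^{i-1})2^{i+1}=8K$. The ceiling only matters when $K<2^{i-1}$, but $i\le L$ together with $q_i\le\min\{r,\dots\}$ keeps $2^{i-1}<\min\{r,K\}\le K$, so that case does not arise. For $i=0$ we have $q_0=1$ and $p_0\le K$. Hence $(p_iq_i)^{1+\alpha}=O(K^{1+\alpha})$, and every decoding term is $\widetilde O(K^{1+\alpha})$. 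The total stored support therefore satisfies $\nnz(D)\le\sum_i p_iq_i=O(K\log n)$, so step (b) costs $\widetilde O(K)$.

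A subtlety remains in step (c): $m_i$ could exceed $q_i^{1+\alpha}$ because of the $\polylog$ factor, and the identity fallback caps it at $r$. Both cases still give $m_i=\widetilde O(q_i^{1+\alpha})$. The same applies to $G_i$ versus $c$: when the fallback is used, $\SafeRec$ on $S$ costs $O(m_ig_i)$ with $g_i\le\widetilde O(p_i^{1+\alpha})$. Finally, reading the measurement $W$ costs $m_ig_i=\widetilde O((p_iq_i)^{1+\alpha})$. Summing over the $O(\log n)$ rounds yields $\widetilde O(r+c+K^{1+\alpha})$, with $\alpha$ rescaled if necessary.
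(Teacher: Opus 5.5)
Your plan is the paper's proof: correctness and the measurement count come from Lemma~\ref{lem:invariant}, followed by a per-round charge controlled by balancing $p_i$ against $q_i$. One step is wrong, though it is easily repaired. The preparatory bound $Q_{i-1}\le2^i$, and hence $q_i\le\min\{r,2^{i+1}\}$, is false, and so is your key estimate $p_iq_i=O(K)$. The stored-support recursion is not geometric with leading term $2^i$. Ignoring the cap at $r$, $q_i=2^i+Q_{i-1}$ gives $Q_i=2Q_{i-1}+2^i$, so $Q_i=(i+1)2^i$; already $Q_2=12>2^3$. Concretely, take $K=2^L$, $c\ge2$ and $r\ge(L+2)2^{L-1}$. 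Then the last round has $p_L=2$ and $q_L=(L+2)2^{L-1}$, so $p_Lq_L=(L+2)K=\Theta(K\log K)$.

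The repair is what the paper does. Show by induction that $Q_i\le(i+1)2^i$ and $q_i\le(i+2)2^{i-1}$, so $q_i,Q_i=O(t_i\log n)$. Combined with your correct observation that $i\le L$ forces $2^{i-1}<\min\{r,K\}$, this gives $p_iq_i=O(K\log n)$. The consequences are:
\begin{itemize}
\item $(p_iq_i)^{1+\alpha}=\widetilde O(K^{1+\alpha})$ still bounds both decoders and the size of $W_i$.
\item The total stored support becomes $\nnz(D^{(i)})=O(K\log^2 n)$.
\item Merging an update into a column that already holds at most $Q_{i-1}$ entries costs $\widetilde O(p_i(Q_{i-1}+q_i))=\widetilde O(K)$ per round.
\end{itemize}
The theorem only claims a $\widetilde O$ bound, so the extra logarithm is harmless and no rescaling of $\alpha$ is needed. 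With this correction, the rest of your accounting goes through unchanged.
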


\begin{proof}
After replacing $K$ by $\min\{K,rc\}$, if $K=0$, the algorithm returns the zero matrix without computing a two-sided measurement. Now assume that $K\ge1$ and let $L:=\ceil{\log_2\min\{r,K\}}$. Algorithm~\ref{alg:abstract-nested} computes one two-sided measurement in each round $i\in\{0,\ldots,L\}$, for a total of $L+1=O(1+\log(1+\min\{r,K\}))$ two-sided measurements. Lemma~\ref{lem:invariant} shows that after round $L$ the current matrix equals $C$, so it remains to bound the work performed outside the computation of these measurements.

We first show that the inner and outer sparsity parameters remain balanced.

\begin{claim}
For every round $i$, $p_iq_i=O(K\log n)$.
\end{claim}

\begin{proof}[Proof of claim]
We first bound $q_i$. We show by induction that $Q_i\le(i+1)2^i$ for every $i\ge0$ and $q_i\le(i+2)2^{i-1}$ for every $i\ge1$. At round $0$, we have $q_0=Q_0=1$. Now suppose that $i\ge1$ and $Q_{i-1}\le i2^{i-1}$. Since the cap at $r$ can only decrease $q_i$, we have $q_i\le2^i+Q_{i-1}\le(i+2)2^{i-1}$. It follows that $Q_i=Q_{i-1}+q_i\le(i+1)2^i$. Thus $q_i=O(t_i\log n)$ and $Q_i=O(t_i\log n)$.

For $i=0$, we have $p_0q_0\le K$. Now fix $i\ge1$. Since $i\le L$, we have $t_{i-1}<\min\{r,K\}$, so $K/t_{i-1}\ge1$. Therefore $p_i\le\ceil{K/t_{i-1}}\le2K/t_{i-1}=4K/t_i$. Combining this with $q_i=O(t_i\log n)$ gives $p_iq_i=O(K\log n)$.
\end{proof}

The proof of the claim also gives $q_i=O(t_i\log n)$ and $Q_i=O(t_i\log n)$, which we will use below. The bounded outer decoder returns at most $p_i$ nonzero measurements in round $i$, and each call to $\SafeRec_{H_i,q_i}$ returns at most $q_i$ nonzero entries. Thus the inner decoders propose at most $p_iq_i=O(K\log n)$ entries in one round and $O(K\log^2 n)$ entries over all rounds.

We next bound the cost of the decoder calls.

\begin{claim}
The outer and inner decoder calls use $\widetilde O(K^{1+\alpha})$ ring and word operations in total.
\end{claim}

\begin{proof}[Proof of claim]
Let $m_i$ be the number of rows of $H_i$. Theorem~\ref{thm:sparse-recovery}, together with the identity fallback, gives $m_i=O(q_i^{1+\alpha}\polylog n)$. Lemma~\ref{lem:total-decoder} shows that the bounded outer decoder performs $O(p_i^{1+\alpha}\polylog n)$ operations over $S_i=R^{m_i}$. Since one operation or zero test in $S_i$ can be performed using $O(m_i)$ ring operations or zero tests in $R$, the outer-decoding cost in round $i$ is
$$O\!\left(m_ip_i^{1+\alpha}\polylog n\right)=O\!\left((p_iq_i)^{1+\alpha}\polylog n\right)=O\!\left(K^{1+\alpha}\polylog n\right).$$
There are at most $p_i$ inner decoder calls in that round. Lemma~\ref{lem:total-decoder} bounds their combined cost by
$$p_i\,O\!\left(q_i^{1+\alpha}\polylog n\right)\le O\!\left((p_iq_i)^{1+\alpha}\polylog n\right)=O\!\left(K^{1+\alpha}\polylog n\right).$$
Summing over the $O(\log n)$ rounds changes only the polylogarithmic factor.
\end{proof}

It remains to account for the recovery matrices, the residual measurement terms, and the sparse updates. The recovery matrices are binary and have polylogarithmic column sparsity. They can be generated in $\widetilde O(r+c)$ time in each round from the construction in Theorem~\ref{thm:sparse-recovery}. Since there are $O(\log n)$ rounds, their total construction time is $\widetilde O(r+c)$. Each outer matrix is stored as a sparse list of its $1$-entries. The matrix $G_i$ is stored as the list of positions where its entries are $1$. We use this same list over $R$ and over $S_i$, so we do not need to store a separate copy of $1_{S_i}$ for each nonzero entry.

The size of the two-sided measurements are also accounted for. In round $i$, the matrix $W_i$ has $m_i g_i$ entries. Theorem~\ref{thm:sparse-recovery} and the identity fallback give $m_i=O(q_i^{1+\alpha}\polylog n)$ and $g_i=O(p_i^{1+\alpha}\polylog n)$, so
$$m_i g_i=O\!\left((p_iq_i)^{1+\alpha}\polylog^2 n\right)=\widetilde O\!\left(K^{1+\alpha}\right).$$
Thus, once the two terms defining $W_i$ have been computed, representing $W_i$ and performing their entrywise subtraction costs $\widetilde O(K^{1+\alpha})$ operations in round $i$. Summing over all rounds changes only the polylogarithmic factor.

Every nonzero entry in $D^{(i)}$ comes from an update proposed by an inner decoder in an earlier round. Since the inner decoders propose $O(K\log^2 n)$ entries over all rounds, $\nnz(D^{(i)})=O(K\log^2 n)$. To compute $H_iD^{(i)}(G_i^{(R)})^{\mathsf T}$, we go through the nonzero entries of $D^{(i)}$ and the corresponding nonzero entries in the relevant columns of $H_i$ and $G_i$. Since both matrices have polylogarithmic column sparsity, this takes $\widetilde O(K)$ operations in each round and $\widetilde O(K)$ operations over all rounds.

Finally, in round $i$, at most $p_i$ columns are modified. Before round $i$, each modified column has at most $Q_{i-1}$ stored entries by Lemma~\ref{lem:invariant}, and the new update has at most $q_i$ entries by Lemma~\ref{lem:total-decoder}. Thus processing all modified columns costs $\widetilde O\!\left(p_i(Q_{i-1}+q_i)\right)=\widetilde O(K)$. For $i\ge1$, this follows from $Q_{i-1}+q_i=O(t_i\log n)$ and $p_i=O(K/t_i)$. Round $0$ also costs $\widetilde O(K)$. Summing over all rounds only adds polylogarithmic factors.

Therefore, the recovery matrices, decoder calls, residual measurements, and sparse updates together use
$$\widetilde O\!\left(r+c+K^{1+\alpha}\right)$$
ring and word operations, not including the cost of the two-sided measurements. This proves the theorem.
\end{proof}

\section{Proof of Main Result}
\label{sec:main-proof}
We are now ready to prove our main result which is Theorem~\ref{thm:main}.
To do this, we apply Theorem~\ref{thm:nested} with $C=AB$. The only extra step is to compute a two-sided measurement efficiently.

\begin{proof}[Proof of Theorem~\ref{thm:main}]
If $R$ is the zero ring, then $AB=0$ for every input. Reading the input and returning the empty list costs $O(n^{\din})$. This is within the claimed bound since $\beta_R(\din,\min\{\dout,2\din\})\ge\din$. We therefore assume that $R$ is nontrivial.

We first remove coordinates that do not affect the product. Let $I_{\mathrm{row}}$ be the nonzero rows of $A$, let $I_{\mathrm{col}}$ be the nonzero columns of $B$, and let $I_{\mathrm{mid}}$ contain the indices that are both nonzero columns of $A$ and nonzero rows of $B$. Entries of $AB$ outside $I_{\mathrm{row}}\times I_{\mathrm{col}}$ are zero, and indices outside $I_{\mathrm{mid}}$ make no contribution to the product. Therefore
$$(AB)[I_{\mathrm{row}},I_{\mathrm{col}}]=A[I_{\mathrm{row}},I_{\mathrm{mid}}]B[I_{\mathrm{mid}},I_{\mathrm{col}}].$$
Sorting and merging the input indices gives active matrices $A_0\in R^{r\times p}$ and $B_0\in R^{p\times c}$ in $\widetilde O(\nnz(A)+\nnz(B))$ operations. Restoring the original row and column labels at the end takes time linear in the output size. If $r=0$, $p=0$, or $c=0$, then $A_0B_0=0$, so the algorithm returns the empty list. From now on, assume $r,p,c\ge1$. Since every row in $I_{\mathrm{row}}$ is a nonzero row of $A$, every column in $I_{\mathrm{col}}$ is a nonzero column of $B$, and every index in $I_{\mathrm{mid}}$ is both a nonzero column of $A$ and a nonzero row of $B$, we have
$$r\le\min\{n,\nnz(A)\},\qquad c\le\min\{n,\nnz(B)\},\qquad p\le\min\{n,\nnz(A),\nnz(B)\}.$$
Therefore
$$r,p,c=O\!\left(n^{\min\{1,\din\}}\right).$$

Let $C:=A_0B_0$ and set $\delta:=\min\{\dout,2\din\}$. Choose a constant $\rho>0$ sufficiently small that $8\rho<\eps$, and choose the recovery parameter $\alpha>0$ so that $2\alpha\le\rho$. Let $C_{\mathrm{out}}$ be the fixed constant in the output-sparsity promise, so $\nnz(C)\le C_{\mathrm{out}}n^{\dout}$, and set
$$K:=\min\left\{\ceil{C_{\mathrm{out}}n^{\dout}},rc,\nnz(A_0)\nnz(B_0)\right\}.$$
Every nonzero entry of $C$ comes from at least one pair of nonzero input entries, so $\nnz(C)\le K$. Since we know $K\le C_{\mathrm{out}}n^{\dout}$ and $K\le\nnz(A_0)\nnz(B_0)=O(n^{2\din})$, we have $K=O(n^{\min\{\dout,2\din\}})=O(n^\delta)$.

For sufficiently large $n$, every fixed constant and polylogarithmic factor below can be absorbed into an extra factor of $n^\rho$. Run Algorithm~\ref{alg:abstract-nested} on $C$ and compute each two-sided measurement by
$$HC(G^{(R)})^{\mathsf T}=H(A_0B_0)(G^{(R)})^{\mathsf T}=(HA_0)\bigl(B_0(G^{(R)})^{\mathsf T}\bigr).$$
The first equality follows from $C=A_0B_0$, and the second follows from associativity. Because no ring entries are reordered, the ring does not need to possess commutativity. Theorem~\ref{thm:nested} thus shows that the algorithm recovers $C$ exactly.

Now, we bound the work outside the rectangular products. There is a running time of $\widetilde O(n^{\din})$ for removing inactive coordinates. In round $i$, each nonzero of $A_0$ can contribute through only the nonzero entries in one column of $H_i$. Since each column of $H_i$ has $\polylog n$ nonzeros, $H_iA_0$ has at most $O(\nnz(A_0)\polylog n)$ contributions. The same argument gives at most $O(\nnz(B_0)\polylog n)$ contributions to $B_0(G_i^{(R)})^{\mathsf T}$. Sorting and combining these contributions constructs both compressed factors in
$$\widetilde O\!\left(\nnz(A_0)+\nnz(B_0)\right)=\widetilde O(n^{\din})$$
operations in each round. There are $O(\log n)$ rounds, so all compressed-factor construction costs $\widetilde O(n^{\din})$ in total.

Theorem~\ref{thm:nested} bounds the remaining recovery overhead by
$$\widetilde O\!\left(r+c+K^{1+\alpha}\right)=\widetilde O\!\left(n^{\din}+n^{\delta(1+\alpha)}\right).$$
We also have $\beta_R(\din,\delta)\ge\delta$. Indeed, $\delta\le2\min\{1,\din\}$, so we may choose $a,b\in[0,\min\{1,\din\}]$ with $a+b=\delta$. Then $a,b\le\din$, and the output-size observation following Definition~\ref{def:sparse-exponent} gives $\omega_{\din,R}(a,1,b)\ge a+b=\delta$. The definition of $\beta_R$ gives the desired inequality. Since $\delta\le2$ and $2\alpha\le\rho$,
$$\delta(1+\alpha)=\delta+\delta\alpha\le\beta_R(\din,\delta)+\rho,$$
so all nonmultiplication work is $O(n^{\beta_R(\din,\delta)+3\rho})$ after absorbing polylogarithmic factors.

It remains to bound the rectangular product in a fixed round $i$. Let $m_i$ and $g_i$ be the numbers of rows of the inner and outer recovery matrices. The identity fallback gives $m_i\le r$ and $g_i\le c$. The parameter bounds in the proof of Theorem~\ref{thm:nested} give $q_i=O(t_i\log n)$ and $p_i=O(1+K/t_i)$. Also, $t_i\le2K=O(n^\delta)$. Therefore, using Theorem~\ref{thm:sparse-recovery} and the bounds above, we obtain
$$\begin{aligned}
m_i
&\le\min\left\{r,C_\alpha q_i^{1+\alpha}\polylog n\right\}\\
&\le\min\left\{O\!\left(n^{\min\{1,\din\}}\right),C_\alpha(t_i\log n)^{1+\alpha}\polylog n\right\}\\
&\le n^{\min\{1,\din,\delta,\log_n t_i\}+2\rho},
\end{aligned}$$
and similarly
$$\begin{aligned}
g_i
&\le\min\left\{c,C_\alpha p_i^{1+\alpha}\polylog n\right\}\\
&\le\min\left\{O\!\left(n^{\min\{1,\din\}}\right),C_\alpha(1+K/t_i)^{1+\alpha}\polylog n\right\}\\
&\le n^{\min\{1,\din,(\delta-\log_n t_i)_+\}+2\rho}.
\end{aligned}$$
Here the $2\rho$ slack absorbs the factors coming from the $(1+\alpha)$ powers, the fixed constants, and the polylogarithmic terms. The active inner dimension satisfies $p\le n$ and may be zero padded to $n$.

The compressed factors remain input-sparse. The contribution bounds above give
$$\begin{aligned}
\nnz(H_iA_0)&=O(\nnz(A_0)\polylog n)=O(n^{\din}\polylog n),\\
\nnz\bigl(B_0(G_i^{(R)})^{\mathsf T}\bigr)&=O(\nnz(B_0)\polylog n)=O(n^{\din}\polylog n).
\end{aligned}$$
By Lemma~\ref{lem:padding}, the polylogarithmic increase in input sparsity does not affect the sparse exponent.

We now make the use of Definition~\ref{def:sparse-exponent} uniform over the rounds. Let $\mathcal G:=\{\min\{j\rho,1\}:0\le j\le\ceil{1/\rho}\}$. For each round $i$, choose $\bar a_i,\bar b_i\in\mathcal G$ so that $m_i\le n^{\bar a_i}$, $g_i\le n^{\bar b_i}$, and
$$\min\{1,\din,\delta,\log_n t_i\}\le\bar a_i\le\min\{1,\din,\delta,\log_n t_i\}+3\rho,$$
$$\min\{1,\din,(\delta-\log_n t_i)_+\}\le\bar b_i\le\min\{1,\din,(\delta-\log_n t_i)_+\}+3\rho.$$
These choices are possible since the bounds above are within $2\rho$ of the target exponents and the values in $\mathcal G$ are spaced by at most $\rho$. Since $\rho$ depends on only $\eps$ we know that $\mathcal G$ has constant size, so only a constant number of exponent pairs can occur. By Lemma~\ref{lem:padding}, allowing $O(n^{\din}\polylog n)$ nonzeros in the compressed factors does not change the sparse exponent. For each pair in $\mathcal G^2$, fix a deterministic algorithm with exponent slack $\rho$. Increasing the two outer exponents to $\bar a_i$ and $\bar b_i$ adds at most $3\rho$ to each one. By Lemma~\ref{lem:padding}, this increases the sparse rectangular exponent by at most $6\rho$ in total. The algorithm for the resulting grid point uses another $\rho$ of exponent slack. Thus the round-$i$ product costs
$$O\!\left(n^{\omega_{\din,R}\left(\min\{1,\din,\delta,\log_n t_i\},1,\min\{1,\din,(\delta-\log_n t_i)_+\}\right)+7\rho}\right).$$

The two outer exponents in this expression lie in $[0,1]$ and have sum at most $\delta$. If $\log_n t_i\le\delta$, their sum is at most $\log_n t_i+(\delta-\log_n t_i)=\delta$. If $\log_n t_i>\delta$, the second exponent is zero and the first is at most $\delta$. We can increase the two exponents until their sum is exactly $\delta$. If their current sum is $s$, we need to add $\delta-s$, while we can add up to $2-s$ before either exponent exceeds $1$. Since $\delta\le2$, we have $\delta-s\le2-s$, so this increase is always possible. Lemma~\ref{lem:padding} and the definition of $\beta_R$ then give
$$\omega_{\din,R}\left(\min\{1,\din,\delta,\log_n t_i\},1,\min\{1,\din,(\delta-\log_n t_i)_+\}\right)\le\max_{\substack{a,b\in[0,1]\\a+b=\delta}}\omega_{\din,R}(a,1,b)\le\beta_R(\din,\delta).$$
Therefore every rectangular product costs $O(n^{\beta_R(\din,\delta)+7\rho})$. There are $O(\log n)$ rounds, and the remaining logarithmic factor is at most $n^\rho$ for all sufficiently large $n$. Since $8\rho<\eps$, the total running time is
$$O\!\left(n^{\beta_R(\din,\delta)+\eps}\right)=O\!\left(n^{\beta_R(\din,\min\{\dout,2\din\})+\eps}\right).$$

Finally, restore the original row and column labels. The output remains canonical, which completes the proof.
\end{proof}

Now that we have proved the general deterministic upper bound, we give two consequences. The first is the dense-input case over a commutative ring, where we simplify the maximization in $\beta_R$. The second uses the best current rectangular matrix-multiplication bounds to find the range where the running time is nearly quadratic.

\begin{proof}[Proof of Corollary~\ref{cor:dense}]
Set $\din=2$. Since $\dout\le2$, the minimum in Theorem~\ref{thm:main} equals $\dout$, and $\omega_{2,R}(a,1,b)=\omega_R(a,1,b)$. Hence
$$
\beta_R(2,\dout)=\max\!\left\{2,\max_{a+b=\dout}\omega_R(a,1,b)\right\}.
$$
If $0\le\dout\le1$, Proposition~\ref{prop:dense-properties} gives $\omega_R(a,1,b)\le a+1+b=1+\dout\le2$ for every $a+b=\dout$. Thus $\beta_R(2,\dout)=2=\omega_R(0,1,1)$.

Assume $1\le\dout\le2$. For $a,b\in[0,1]$ with $a+b=\dout$, permutation symmetry and the rectangular interpolation inequality of Lotti and Romani, as recorded in Bennett et al.~\cite[Corollary~2.6]{BennettEtAl}, give
$$
\omega_R(a,1,b)=\omega_R(1,a,b)
\le\omega_R(1,1,\dout-1)=\omega_R(\dout-1,1,1).
$$
Equality is attained at $(a,b)=(\dout-1,1)$. Therefore $\beta_R(2,\dout)=\omega_R((\dout-1)_+,1,1)$, and the result follows from Theorem~\ref{thm:main}.
\end{proof}

\begin{proof}[Proof of Corollary~\ref{cor:quadratic}]
If $\dout\le1.321334$, then $(\dout-1)_+\le0.321334$. By the definition of $\omega_R^\perp$, monotonicity, and continuity from Proposition~\ref{prop:dense-properties}, the current rectangular bound $\omega_R^\perp\ge0.321334$~\cite{VXXZ} gives $\omega_R((\dout-1)_+,1,1)=2$. Corollary~\ref{cor:dense} therefore gives $O(n^{2+\eps})$ time.
\end{proof}

\section{Optimality}
\label{sec:optimality}
The rectangular padding argument underlying our lower bound appears in Bennett et al.~\cite[Proposition~4.3]{BennettEtAl}. We give the argument in our notation, extend it to the full parameter range needed here, and combine it with Theorem~\ref{thm:main}.

\begin{theorem}[Optimality]
\label{thm:optimal}
Let $R$ be a nontrivial associative ring with identity and let $\din,\dout\in[0,2]$. Suppose that, for some $\gamma\ge0$, $(\din,\dout)$-FSMM over $R$ can be solved deterministically in $O(n^{\gamma+\eps})$ ring and word operations for every constant $\eps>0$. Then
$$
\gamma\ge\beta_R(\din,\min\{\dout,2\din\}).
$$
Consequently, the running-time exponent in Theorem~\ref{thm:main} is optimal.
\end{theorem}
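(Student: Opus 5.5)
The lower bound has two parts, one for each term in the maximum defining $\beta_R$; optimality then follows by comparing with Theorem~\ref{thm:main}. Write $\delta:=\min\{\dout,2\din\}$. The term $\din$ is the easy one. Any correct algorithm must read its whole input, since changing an unread nonzero entry can change the product. To make this concrete, take $A$ with $\Theta(n^{\din})$ entries equal to $1_R$, each placed in its own row and column (possible as long as $\din\le1$; for $\din>1$ use a dense $n\times n$ block padded appropriately), and take $B$ to be the identity restricted accordingly. Then $AB$ has $\Theta(n^{\min\{\din,\dout\}})$ nonzeros. If $\dout<\din$, instead let $B$ have $n^{\dout}$ columns whose entries sit in a set of rows that $A$ never hits, except for one entry. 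An adversary can then hide a single unread entry of $A$ so that it changes the output. This gives $\gamma\ge\din$. For this part I would first check whether the paper's model (canonical sparse input lists) makes ``must read the input'' immediate, and otherwise fall back on the adversary argument.

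The main term is $\omega_{\din,R}(a,1,b)$ for $a,b\in[0,1]$ with $a+b=\delta$. Fix such $a,b$. The plan is a reduction: any sparse $\lceil n^a\rceil\times n$ by $n\times\lceil n^b\rceil$ instance with $O(n^{\din})$ nonzeros per factor should be a valid $(\din,\dout)$-FSMM instance. Padding with zeros to $n\times n$ keeps the input sparsity, and the output has at most $n^{a+b}=n^\delta\le n^{\dout}$ entries. So the output promise holds with constant $1$, and we also have $\delta\le2\din$. Running the assumed algorithm and restricting its output solves the rectangular instance in $O(n^{\gamma+\eps})$. By Definition~\ref{def:sparse-exponent}, this gives $\omega_{\din,R}(a,1,b)\le\gamma$. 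Taking the maximum over $a,b$ yields $\gamma\ge\beta_R(\din,\delta)$.

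The main obstacle is a technical mismatch with Definition~\ref{def:sparse-exponent}. That definition asks for a deterministic algorithm whose promise is only that the inputs have $O(n^{\din})$ nonzeros; there is no output promise. Since $n^a n^b=n^\delta\le n^{\dout}$ holds automatically, this is fine. The remaining care points are the ceiling rounding of dimensions and the hidden promise constants. Rounding adds at most a constant factor of $4$ to the output size, which a rescaled promise constant absorbs. If the FSMM promise constant is fixed in advance, I would instead scale $n$ by a constant factor $N=\lceil cn\rceil$ and restate the instance there, which changes the running time by only a constant. The paper's remark that the maximum is attained, from the continuity noted after Lemma~\ref{lem:padding}, lets us choose the optimal $(a,b)$ directly rather than taking a supremum. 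Finally, combining $\gamma\ge\beta_R(\din,\delta)$ with the matching upper bound of Theorem~\ref{thm:main} gives the optimality claim.
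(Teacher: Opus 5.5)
Your proposal is correct and follows the paper's proof. The bound $\gamma\ge\din$ comes from the model's convention that the running time includes reading input lists of length $\Theta(n^{\din})$. The bound $\gamma\ge\omega_{\din,R}(a,1,b)$, for each $a,b\in[0,1]$ with $a+b=\min\{\dout,2\din\}$, comes from zero-padding a rectangular instance to $n\times n$, noting that the padded product has at most $\ceil{n^a}\ceil{n^b}=O(n^{\dout})$ nonzeros, and invoking Definition~\ref{def:sparse-exponent}. Because the paper's model already charges for reading the input, your adversary fallback is not needed; as sketched it would also need more care, since e.g.\ over $\mathbb{F}_2$ an unread entry can only be perturbed in position, not in value.
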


\begin{proof}
By the computational model of Section~\ref{sec:preliminaries}, the running time includes reading the input. Since the two sparse input lists may have total length $\Theta(n^{\din})$, we have $\gamma\ge\din$.

Fix $a,b\in[0,1]$ with $a+b=\min\{\dout,2\din\}$. Consider matrices $X\in R^{\ceil{n^a}\times n}$ and $Y\in R^{n\times\ceil{n^b}}$ with at most $O(n^{\din})$ nonzero entries each. Add zero rows to $X$ and zero columns to $Y$ to obtain square matrices $X',Y'\in R^{n\times n}$. This does not change their sparsities, and the nonzero part of $X'Y'$ is exactly $XY$. Also,
$$\nnz(X'Y')\le\ceil{n^a}\ceil{n^b}=O(n^{a+b})=O(n^{\min\{\dout,2\din\}})=O(n^{\dout}).$$
Thus $(X',Y')$ satisfies the $(\din,\dout)$-FSMM sparsity bounds. Running the assumed FSMM algorithm on $X'$ and $Y'$ and removing the padded rows and columns then computes the canonical sparse list of $XY$ within the same asymptotic running time. Definition~\ref{def:sparse-exponent} therefore gives $\gamma\ge\omega_{\din,R}(a,1,b)$.

Since this holds for every $a,b\in[0,1]$ with $a+b=\min\{\dout,2\din\}$,
$$
\gamma\ge
\max\!\left\{\din,
\max_{\substack{a,b\in[0,1]\\a+b=\min\{\dout,2\din\}}}
\omega_{\din,R}(a,1,b)\right\}
=\beta_R(\din,\min\{\dout,2\din\}).
$$
Theorem~\ref{thm:main} gives a deterministic upper bound with the same exponent, which proves optimality.
\end{proof}

The padding reduction and the matching randomized upper bound were known previously~\cite{AbboudEtAl,BennettEtAl}. Theorem~\ref{thm:main} gives the deterministic arbitrary-ring upper bound needed to match the same exponent without randomization.

\section{Conclusion}
We gave a deterministic exact algorithm for fully sparse matrix multiplication whose exponent matches the optimal randomized curve in every input-output sparsity range. Our main contribution is a nested recovery theorem for a sparse matrix that is not given explicitly. From there, an outer bounded decoder over a direct-product ring is able to find the nonzero inner measurements. A geometric sequence of scales balances the two sparsity levels and keeps the temporary updates under control.

Several questions still remain though. The $k^{1+\alpha}\polylog n$ deterministic recovery method is one source of the exponent slack in our upper bound. A near-linear exact recovery scheme over arbitrary rings would remove this source of overhead. It would also be useful to understand whether nested recovery can work for approximately sparse products, building on the robust reductions of Bringmann, Fischer, and Nakos~\cite{BringmannFischerNakos}. The nested recovery theorem may also apply when an unknown sparse object is the output of a bilinear computation whose measurements factor efficiently.

\end{document}